\newtheorem{theorem}{Theorem}
\newtheorem{lemma}{Lemma}
\newtheorem{assumption}{Assumption}
\newtheorem{definition}{Definition}
\newtheorem{problem}{Problem}
\newcommand{\Null}{\mathrm{Null}}
\newcommand{\Range}{\mathrm{Range}}
\newcommand{\one}{\mathbf{1}}
\newcommand{\rank}{\mathrm{rank}}
\newcommand{\myspan}{\mathrm{span}}
\newcommand{\mydiag}{\mathrm{diag}}
\newcommand{\blkdiag}{\mathrm{blkdiag}}
\newcommand{\T}{\mathrm{T}}
\newcommand{\R}{\mathbb{R}}
\newcommand{\G}{\mathcal{G}}
\newcommand{\E}{\mathcal{E}}
\newcommand{\V}{\mathcal{V}}
\newcommand{\N}{\mathcal{N}}
\renewcommand{\L}{\mathcal{L}}
\newcommand{\dia}[1]{\mathrm{diag}\left(#1\right)} 
\newcommand{\leftm}{\left[\begin{array}}
\newcommand{\rightm}{\end{array}\right]}
\begin{document}

\title{Bearing-Based Distributed Control and Estimation of Multi-Agent Systems}
\author{Shiyu Zhao and Daniel Zelazo
\thanks{S. Zhao and D. Zelazo are with the Faculty of Aerospace Engineering, Technion - Israel Institute of Technology, Haifa, Israel.
    {\tt\small szhao@tx.technion.ac.il, dzelazo@technion.ac.il}}
}
\IEEEoverridecommandlockouts
\maketitle
\begin{abstract}
This paper studies the distributed control and estimation of multi-agent systems based on \emph{bearing} information.
In particular, we consider two problems: (i) the distributed control of bearing-constrained formations using relative position measurements and (ii) the distributed localization of sensor networks using bearing measurements.
Both of the two problems are considered in arbitrary dimensional spaces.
The analyses of the two problems rely on the recently developed bearing rigidity theory.
We show that the two problems have the same mathematical formulation and can be solved by identical protocols.
The proposed controller and estimator can globally solve the two problems without ambiguity.
The results are supported with illustrative simulations.
\end{abstract}
\overrideIEEEmargins

\section{Introduction}

In recent years, distance rigidity theory has played an important role in the area of distributed control and estimation of multi-agent systems. For example, it has been widely applied in distance-based formation control \cite{Krick2009IJC,Dorfler2009ECC,oh2013IJRNC,SunZhiyong2014IFAC} and distance-based network localization \cite{Anderson2006NetworkLocalization,Mao2007NetworkLocalization}.
The \emph{global distance rigidity} can ensure the unique shape of a framework, but it is very difficult to examine mathematically.
Many of the existing works adopted the assumption on \emph{infinitesimal distance rigidity} which can be easily examined by a rank condition \cite{Krick2009IJC,Dorfler2009ECC,oh2013IJRNC,SunZhiyong2014IFAC}.
Infinitesimal distance rigidity is, however, not able to ensure a unique shape, which may result in undesired control or estimation solutions.
Additionally, due to a large number of undesired equilibriums in distance-based formation control, the gradient control law can only be proved to be locally stable for general formations \cite{Krick2009IJC,Dorfler2009ECC,oh2013IJRNC,SunZhiyong2014IFAC}.

As with distances, bearings can also be used to characterize the shape of a network.
In recent years, there has been a growing interest in bearing rigidity theory (also known as parallel rigidity theory) \cite{eren2003,zelazo2014SE2Rigidity,bishopconf2011rigid,Eren2012IJC} and bearing-based control and estimation problems including bearing-based formation control \cite{bishop2010SCL,Eren2012IJC,bishopconf2011rigid,zhao2013SCLDistribued,Eric2014ACC} and bearing-based network localization \cite{Piovan2013Automatica,ZhuGuangwei2014Automatica}.
Most of the previous studies on bearing rigidity only focused on frameworks in two-dimensional ambient spaces \cite{eren2003,bishopconf2011rigid,Eren2012IJC,zelazo2014SE2Rigidity}.
In our recent work \cite{zhao2014TACBearing}, we extended the previous studies and established the theory of bearing rigidity in arbitrary dimensions.
We showed that bearing rigidity has a number of attractive features compared to distance rigidity.
For example, infinitesimal bearing rigidity can globally determine the unique shape of a framework, and it can also be conveniently examined by a rank condition of the bearing rigidity matrix.
The bearing rigidity theory has been successfully applied to solve bearing-only formation control problems in \cite{zhao2014TACBearing}.

In this paper, we apply bearing rigidity theory to solve two problems: (i) distributed control of bearing-constrained formations using relative position measurements and (ii) distributed localization of sensor networks using bearing measurements. Both of the problems are considered in arbitrary dimensions.
The problem formulations of the two problems actually are the same and hence the two problems can be solved by identical protocols.
The proposed linear controllers and estimators can globally solve the two problems without ambiguity.


\paragraph*{Notations}

Given $A_i\in\mathbb{R}^{p\times q}$ for $i=1,\dots,n$, denote $ \mydiag(A_i)\triangleq\blkdiag\{A_1,\dots,A_n\}\in\mathbb{R}^{np\times nq}$.
Let $\Null(\cdot)$ and $\Range(\cdot)$ be the null space and range space of a matrix, respectively.
Denote $I_d\in\R^{d\times d}$ as the identity matrix, and $\one\triangleq[1,\dots,1]^\T$.
Let $\|\cdot\|$ be the Euclidian norm of a vector or the spectral norm of a matrix, and $\otimes$ be the Kronecker product.
An undirected graph, denoted as $\mathcal{G}=(\mathcal{V},\mathcal{E})$, consists of a vertex set $\mathcal{V}$ and an edge set $\mathcal{E}\subseteq \mathcal{V} \times \mathcal{V}$.
Let $n=|\V|$ and $m=|\E|$.
The set of neighbors of vertex $i$ is denoted as $\mathcal{N}_i\triangleq\{j \in \mathcal{V}: \ (i,j)\in \mathcal{E}\}$.
An {orientation} of an undirected graph is the assignment of a direction to each edge.
An {oriented graph} is an undirected graph together with an orientation.
The {incidence matrix} of an oriented graph is denoted as $H\in\mathbb{R}^{m\times n}$.
For a connected graph, one always has $H\one = 0$ and $\rank(H)=n-1$.

\section{Preliminaries to Bearing Rigidity Theory}\label{section_preliminary}

Bearing rigidity theory will play a key role in the analysis of bearing-based distributed control and estimation problems.
In this section, we revisit a number of important notions and conclusions in bearing rigidity theory \cite{zhao2014TACBearing}.

We first introduce a particularly important orthogonal projection matrix operator.
For any nonzero vector $x\in\R^d$ ($d\ge2$), define the operator $P: \R^d\rightarrow\R^{d\times d}$ as
\begin{align*}
    P(x) \triangleq I_d - \frac{x}{\|x\|}\frac{x^\T }{\|x\|}.
\end{align*}
For notational simplicity, we denote $P_x=P(x)$.  Note that $P_x$ is an orthogonal projection matrix that geometrically projects any vector onto the orthogonal compliment of $x$.
It is easily verified that $P_x^\T =P_x$, $P_x^2=P_x$, and $P_x$ is positive semi-definite.
Moreover, $\Null(P_x)=\myspan\{x\}$ and $P_x$ has one zero eigenvalue and $d-1$ eigenvalues as $1$.

Given a finite collection of $n$ points $\{p_i\}_{i=1}^n$ in $\R^d$ ($n\ge2$, $d\ge2$), a \emph{configuration} is denoted as $p=[p_1^\T,\dots,p_n^\T]^\T\in\mathbb{R}^{dn}$.
A \emph{framework} in $\R^d$, denoted as $\G(p)$, is an undirected graph $\G=(\V,\E)$ together with a configuration $p$, where vertex $i\in\V$ in the graph is mapped to to the point $p_i$ in the configuration.
For a framework $\G(p)$, define the \emph{edge vector} and the \emph{bearing}, respectively, as
\begin{align*}
e_{ij}\triangleq p_j-p_i, \quad g_{ij}\triangleq e_{ij}/\|e_{ij}\|, \quad \forall(i,j)\in\E.
\end{align*}
The bearing $g_{ij}$ is a unit vector.
Note $e_{ij}=-e_{ji}$ and $g_{ij}=-g_{ji}$.
It is often helpful to consider an oriented graph and express the edge vector and the bearing for the $k$th directed edge of the oriented graph as $e_{k}\triangleq p_j-p_i$, $g_{k}\triangleq {e_{k}}/{\|e_{k}\|}$, $\forall k\in\{1,\dots,m\}$.
Let $e=[e_1^\T ,\dots,e_m^\T ]^\T$ and $g=[g_1^\T ,\dots,g_m^\T ]^\T$.
Note $e$ satisfies $e=\bar{H}p$ where $\bar{H}=H\otimes I_d$ and $H$ is the incidence matrix.
Define the \emph{bearing function} $F_B: \R^{dn}\rightarrow\R^{dm}$ as
\begin{align*}
    F_B(p)\triangleq [g_1^\T, \dots,g_m^\T]^\T.
\end{align*}
The bearing function describes all the bearings in the framework.
The \emph{bearing rigidity matrix} is defined as the Jacobian of the bearing function,
\begin{align}\label{eq_rigidityMatrixDefinition}
    R_B(p) \triangleq \frac{\partial F_B(p)}{\partial p}\in\R^{dm\times dn}.
\end{align}
Two important properties of the bearing rigidity matrix are given as below.

\begin{lemma}[\cite{zhao2014TACBearing}]
The bearing rigidity matrix in \eqref{eq_rigidityMatrixDefinition} can be expressed as
$R_B(p)= \mydiag\left({P_{g_k}}/{\|e_k\|}\right)\bar{H}$.
\end{lemma}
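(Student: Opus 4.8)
The plan is to compute the Jacobian $R_B(p) = \partial F_B(p)/\partial p$ directly by the chain rule, exploiting two structural facts about $F_B$. First, each bearing $g_k = e_k/\|e_k\|$ depends on the configuration $p$ only through its own edge vector $e_k$. Second, the stacked edge vector depends linearly on $p$ through $e = \bar{H}p$, so $\partial e/\partial p = \bar{H}$ and the $k$th block row of $\bar{H}$ equals $\partial e_k/\partial p$. Thus the computation decouples edge by edge, and the only genuine differentiation is that of the normalization map $e_k \mapsto e_k/\|e_k\|$.

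The key step is to differentiate this normalization. Treating $g_k$ as a function of $e_k$ alone, I would apply the quotient rule to obtain
\begin{align*}
\frac{\partial g_k}{\partial e_k} = \frac{1}{\|e_k\|}I_d - \frac{e_k}{\|e_k\|^2}\frac{e_k^\T}{\|e_k\|} = \frac{1}{\|e_k\|}\left(I_d - g_k g_k^\T\right),
\end{align*}
where the last equality uses $g_k = e_k/\|e_k\|$. I would then observe that because $g_k$ is a unit vector, $I_d - g_k g_k^\T = P_{g_k}$ by the definition of the projection operator, so that $\partial g_k/\partial e_k = P_{g_k}/\|e_k\|$.

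Combining these two observations via the chain rule gives $\partial g_k/\partial p = (P_{g_k}/\|e_k\|)\,\partial e_k/\partial p$, where $\partial e_k/\partial p$ is the $k$th block row of $\bar{H}$. Stacking over all $m$ edges, the scalar-times-projection factors assemble into the block-diagonal matrix $\mydiag(P_{g_k}/\|e_k\|)$, which left-multiplies $\bar{H}$, yielding $R_B(p) = \mydiag(P_{g_k}/\|e_k\|)\,\bar{H}$ as claimed.

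I expect the computation to be essentially routine; the only points requiring care are recognizing that the projection depends only on the bearing direction and not on its magnitude (so that $P_{e_k} = P_{g_k}$), and correctly tracking the block structure so that the per-edge scalars $1/\|e_k\|$ and projections $P_{g_k}$ land on the block diagonal rather than becoming entangled with $\bar{H}$.
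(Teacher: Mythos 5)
Your proof is correct: the chain-rule decomposition through $e=\bar{H}p$ together with the derivative $\partial(e_k/\|e_k\|)/\partial e_k = P_{g_k}/\|e_k\|$ is exactly the standard argument, and this paper itself states the lemma without proof (citing the reference), where the same computation appears. No gaps — the quotient-rule step and the block-diagonal assembly are both handled correctly.
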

\begin{lemma}[\cite{zhao2014TACBearing}]\label{lemma_bearingRigidityMatrixRank}
For any framework $\G(p)$, the bearing rigidity matrix satisfies $\rank(R_B)\le dn-d-1$ and $\myspan{\one\otimes I_d, p}\subseteq \Null(R_B)$.
\end{lemma}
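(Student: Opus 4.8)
The plan is to build everything on the factorization $R_B(p)=\mydiag(P_{g_k}/\|e_k\|)\bar{H}$ supplied by the previous lemma: first verify the subspace inclusion by checking that $R_B$ annihilates each generator of $\myspan\{\one\otimes I_d, p\}$, and then obtain the rank bound as an immediate consequence via rank--nullity. For the $d$ columns of $\one\otimes I_d$, the Kronecker mixed-product rule gives $\bar{H}(\one\otimes I_d)=(H\otimes I_d)(\one\otimes I_d)=(H\one)\otimes I_d$, which vanishes because $H\one=0$; hence $R_B(\one\otimes I_d)=0$.

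Next I would handle the generator $p$. Using $e=\bar{H}p$, I write $R_B p=\mydiag(P_{g_k}/\|e_k\|)\,e$, whose $k$th block is $P_{g_k}e_k/\|e_k\|$. Since $e_k=\|e_k\|\,g_k$ and $g_k\in\Null(P_{g_k})=\myspan\{g_k\}$ (recall $P_x x=0$), each block reduces to $P_{g_k}g_k=0$, so $R_B p=0$. Together with the previous computation this establishes $\myspan\{\one\otimes I_d, p\}\subseteq\Null(R_B)$.

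For the rank bound I would count the dimension of this subspace and apply $\rank(R_B)=dn-\dim\Null(R_B)$. The $d$ columns of $\one\otimes I_d$ are linearly independent, spanning the translation directions $\{\one\otimes c: c\in\R^d\}$. The point needing care is that $p$ lies outside this $d$-dimensional subspace: membership would force $p=\one\otimes c$ for some $c$, i.e. all $p_i$ coincident, which is incompatible with a framework whose bearings are defined, since each edge requires $e_k\neq 0$. Hence the $d+1$ generators are independent, $\dim\Null(R_B)\ge d+1$, and therefore $\rank(R_B)\le dn-d-1$. The algebra is routine; the only genuine subtlety is this last independence argument, where the non-degeneracy of the configuration (distinctness of adjacent points, guaranteeing $p\neq\one\otimes c$) must be invoked explicitly to separate $p$ from the translation subspace.
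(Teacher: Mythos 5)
Your proof is correct. Note that the paper itself states this lemma without proof, importing it from the cited reference \cite{zhao2014TACBearing}; your argument---annihilating $\one\otimes I_d$ via $H\one=0$ and the mixed-product rule, annihilating $p$ via $P_{g_k}e_k=0$, then getting the rank bound from rank--nullity once $p\notin\Range(\one\otimes I_d)$ is ruled out by the distinctness of adjacent points---is exactly the standard proof given in that reference, so there is nothing to flag beyond the (harmless) implicit assumption that the graph has at least one edge, without which the statement is vacuous anyway.
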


Let $\delta p$ be a variation of $p$.
If $R_B(p)\delta p=0$, then $\delta p$ is called an \emph{infinitesimal bearing motion} of $\G(p)$.
A motion is an infinitesimal bearing motion if and only if the motion preserves the bearing between any pair of neighbors in the framework.
An arbitrary framework always has two kinds of \emph{trivial} infinitesimal bearing motions: translation and scaling of the entire framework.
We next define one of the most important concepts in bearing rigidity theory.

\begin{definition}[{Infinitesimal Bearing Rigidity}]\label{definition_infinitesimalParallelRigid}
    A framework is \emph{infinitesimally bearing rigid} if the infinitesimal bearing motions of the framework are trivial.
\end{definition}

The following is a necessary and sufficient condition for infinitesimal bearing rigidity.

\begin{theorem}[\cite{zhao2014TACBearing}]\label{theorem_conditionInfiParaRigid}
    For any framework $\G(p)$, the following statements are equivalent:
    \begin{enumerate}[(a)]
    \item $\G(p)$ is {infinitesimally bearing rigid};
    \item $\rank(R_B)=dn-d-1$;
    \item $\Null(R_B)=\myspan\{\one\otimes I_d, p\}$.
    \end{enumerate}
\end{theorem}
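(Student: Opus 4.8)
The plan is to treat this theorem as a repackaging of Lemma~\ref{lemma_bearingRigidityMatrixRank} together with Definition~\ref{definition_infinitesimalParallelRigid} and the rank--nullity theorem, so that essentially all of the geometric content is inherited from the lemma. The organizing observation is that, by the definition stated just before Definition~\ref{definition_infinitesimalParallelRigid}, the set of infinitesimal bearing motions of $\G(p)$ is exactly $\Null(R_B)$; hence the entire statement reduces to a comparison between $\Null(R_B)$ and the subspace of trivial motions.

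First I would pin down the subspace of trivial motions. A rigid translation $p_i\mapsto p_i+tv$ has velocity $\delta p=\one\otimes v$, so all translations sweep out $\Range(\one\otimes I_d)$; a uniform scaling about an arbitrary center $c$, $p_i\mapsto p_i+t(p_i-c)$, has velocity $\delta p=p-\one\otimes c$, which lies in $\myspan\{\one\otimes I_d,p\}$ and recovers the direction $p$ modulo translations. Thus the trivial infinitesimal motions form precisely $\myspan\{\one\otimes I_d,p\}$. I would also record here that these two families are genuinely independent: $p\in\Range(\one\otimes I_d)$ would force all $p_i$ to coincide, which is excluded for a framework with well-defined bearings, so $\dim\myspan\{\one\otimes I_d,p\}=d+1$.

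Next I would prove (a)$\Leftrightarrow$(c). By Definition~\ref{definition_infinitesimalParallelRigid}, $\G(p)$ is infinitesimally bearing rigid iff every infinitesimal bearing motion is trivial, i.e.\ iff $\Null(R_B)\subseteq\myspan\{\one\otimes I_d,p\}$. Lemma~\ref{lemma_bearingRigidityMatrixRank} already supplies the reverse inclusion $\myspan\{\one\otimes I_d,p\}\subseteq\Null(R_B)$, so (a) holds iff the two subspaces coincide, which is exactly (c); the direction (c)$\Rightarrow$(a) is then immediate. Finally (b)$\Leftrightarrow$(c) follows from rank--nullity: $\dim\Null(R_B)=dn-\rank(R_B)$, and since the inclusion $\myspan\{\one\otimes I_d,p\}\subseteq\Null(R_B)$ is automatic, equality in (c) is equivalent to the dimension match $\dim\Null(R_B)=d+1$, i.e.\ to $\rank(R_B)=dn-d-1$.

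I expect the only real subtlety to be the bookkeeping around trivial motions, namely arguing that translations and scalings exhaust the trivial motions and verifying the independence that yields the dimension $d+1$ (equivalently, that $p$ does not already live in the translation subspace). Everything carrying genuine rigidity-theoretic content, in particular the a priori bound $\rank(R_B)\le dn-d-1$ and the inclusion $\myspan\{\one\otimes I_d,p\}\subseteq\Null(R_B)$, is delegated to Lemma~\ref{lemma_bearingRigidityMatrixRank}; once those are in hand, the three equivalences are purely linear-algebraic.
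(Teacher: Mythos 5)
Your proposal is correct, but note that the paper itself offers no proof to compare against: Theorem~\ref{theorem_conditionInfiParaRigid} is stated as an imported result from \cite{zhao2014TACBearing}, exactly like Lemma~\ref{lemma_bearingRigidityMatrixRank}. Taken on its own merits, your reconstruction is sound and uses only material available in the paper: the identification of infinitesimal bearing motions with $\Null(R_B)$ (stated just above Definition~\ref{definition_infinitesimalParallelRigid}), the inclusion $\myspan\{\one\otimes I_d,p\}\subseteq\Null(R_B)$ from Lemma~\ref{lemma_bearingRigidityMatrixRank}, and rank--nullity. The two places where care is genuinely needed are both handled: (i) you argue that translations and scalings span exactly $\myspan\{\one\otimes I_d,p\}$, so ``all motions are trivial'' is the subspace equality in (c) rather than a pointwise statement about pure translations or pure scalings; and (ii) you verify $p\notin\Range(\one\otimes I_d)$ (all points coinciding is excluded when bearings are defined), which is what makes $\dim\myspan\{\one\otimes I_d,p\}=d+1$ and hence makes (b) and (c) interchangeable via dimension counting. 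Without (ii) the equivalence (b)$\Leftrightarrow$(c) would have a hole, since a degenerate $p$ would drop the dimension of the candidate null space to $d$. The one caveat is that your proof delegates all rigidity-theoretic content to Lemma~\ref{lemma_bearingRigidityMatrixRank}, which this paper also states without proof; so your argument establishes the equivalence relative to that lemma, which is the most that can be done from within this paper.
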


\begin{theorem}[\cite{zhao2014TACBearing}]\label{theorem_IBRImplyUniqueShape}
An infinitesimally bearing rigid framework can be uniquely determined by the inter-neighbor bearings up to a translation and a scaling factor.
\end{theorem}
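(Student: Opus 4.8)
The plan is to characterize the \emph{entire} set of configurations that share the inter-neighbor bearings of an infinitesimally bearing rigid framework $\G(p)$, and to show that this set is precisely the orbit of $p$ under translation and positive scaling. Concretely, I would take a second configuration $p'$ on the same graph $\G$ in $\R^d$ whose bearings coincide with those of $p$, i.e. $g_k(p')=g_k(p)$ for every directed edge $k$, and argue that necessarily $p' = \one\otimes b + c\,p$ for some $b\in\R^d$ and scalar $c>0$, which is exactly a translation together with a scaling.

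First I would translate the geometric hypothesis into an algebraic one. Writing $g_k=g_k(p)$, the equality $g_k(p')=g_k$ says the edge vector $e'_k=p'_j-p'_i$ is a positive multiple of $g_k$, namely $e'_k=\|e'_k\|\,g_k$. Applying the projection operator then gives $P_{g_k}e'_k=\|e'_k\|\,P_{g_k}g_k=0$ for every edge, since $P_{g_k}g_k=0$. Stacking these identities over all edges and using $e'=\bar{H}p'$ yields $\mydiag(P_{g_k})\bar{H}p'=0$.

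Next I would connect this to the null space of the bearing rigidity matrix. Since $R_B(p)=\mydiag\!\left(P_{g_k}/\|e_k\|\right)\bar{H}$ and each $\|e_k\|>0$, left-multiplication by the invertible block-diagonal matrix $\mydiag(\|e_k\|I_d)$ shows that $\mydiag(P_{g_k})\bar{H}p'=0$ is equivalent to $R_B(p)p'=0$. Hence every configuration sharing the bearings of $p$ lies in $\Null(R_B(p))$. Invoking the infinitesimal bearing rigidity of $\G(p)$ through Theorem~\ref{theorem_conditionInfiParaRigid}(c), we have $\Null(R_B(p))=\myspan\{\one\otimes I_d,\,p\}$, so $p'=\one\otimes b + c\,p$ for some $b\in\R^d$ and $c\in\R$; that is, $p'_i=b+c\,p_i$ for every $i$.

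The last step, and the one requiring the most care, is to fix the scaling factor. Membership in $\Null(R_B(p))$ only forces $e'_k$ to be \emph{parallel} to $g_k$, so a priori $c$ could be negative, which would flip every bearing, or zero, which would collapse $\G(p')$ to a single point and leave the bearings undefined. I would rule these out by returning to the exact equality $g_k(p')=g_k$: from $e'_k=c\,e_k$ one gets $g_k(p')=\sgn(c)\,g_k$, so matching bearings forces $c>0$. This is where the argument genuinely uses the hypothesis that the bearings are \emph{equal} rather than merely parallel, and it is the main subtlety to flag; the remainder is a direct consequence of the rank and null-space characterization of infinitesimal bearing rigidity.
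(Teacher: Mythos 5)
Your proof is correct, and it is essentially the proof this result admits given the paper's machinery: note that the paper itself only states Theorem~\ref{theorem_IBRImplyUniqueShape} as a citation to \cite{zhao2014TACBearing} without proof, and your argument (equal bearings $\Rightarrow P_{g_k}e'_k=0$ $\Rightarrow \mydiag(P_{g_k})\bar{H}p'=0$ $\Rightarrow p'\in\Null(R_B(p))=\myspan\{\one\otimes I_d,\,p\}$ by Theorem~\ref{theorem_conditionInfiParaRigid}(c), then fixing the sign of the scaling factor) is exactly the route taken in that reference. Your care with the sign is the right subtlety to flag: membership in the null space only gives parallel edges, so $c<0$ (point reflection, all bearings flipped) and $c=0$ (collapse, bearings undefined) must be excluded by the exact equality of bearings, which you do.
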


As will be shown later, infinitesimal bearing rigidity plays an important role in bearing-only network localization.

\section{Problem Formulation}\label{section_problemFormulation}

In this section, we present the formulations of the two problems of bearing-based formation control and bearing-based network localization, and then propose linear controllers and estimators to solve them.

Consider a network of $n$ agents in $\R^d$ ($n\ge2$, $d\ge2$).
The network may represent a multi-vehicle system or a sensor network.
Denote ${p}_i\in\R^d$ as the position of agent $i\in\{1,\dots,n\}$, and let $p=[p_1^\T ,\dots,p_n^\T ]^\T \in\R^{dn}$.
The interaction graph $\G=(\V,\E)$ is assumed to be connected, undirected, and fixed.
The network is denoted as $\G(p)$.

\subsection{Bearing-Based Formation Control}

In this subsection, we study the problem of bearing-based formation control, the aim of which is to stabilize a target formation with bearing constraints using relative position measurements.
Specifically, the target formation is specified by the constant {bearing constraints} $\{g_{ij}^*\}_{(i,j)\in\E}$ where $g_{ij}^*=-g_{ji}^*$.
The bearing constraints must be feasible such that there exist formations satisfying the constraints.
Suppose agent $i$ can measure the relative positions of its neighbors, $\{p_i-p_{j}\}_{j\in\N_i}$.
The dynamics of agent $i$ are assumed to be a single integrator $\dot{p}_i(t) = u_i(t)$, where $u_i(t)\in\R^d$ is the input to be designed.
The problem of bearing-based formation control is stated as below.

\begin{problem}[Bearing-Based Formation Control]\label{problem_bearingonlyformationcontrol}
    Given feasible constant bearing constraints $\{g_{ij}^*\}_{(i,j)\in\E}$ and an initial position $p(0)$, design $u_i(t)$ ($i\in\V$) based on the relative position measurements $\{p_i(t)-p_j(t)\}_{j\in\N_i}$ such that $g_{ij}(t)\rightarrow g_{ij}^*$ as $t\rightarrow\infty$ for all $(i,j)\in\E$.
\end{problem}

We next propose two controllers to solve Problem~\ref{problem_bearingonlyformationcontrol}.
The first controller is leaderless, and the second one assumes fixed leaders in the formation.

\subsubsection{Leaderless Case}
The leaderless formation controller is designed as
\begin{align}\label{eq_formationControlLaw_leaderless}
    \dot{p}_i(t)=-\sum_{j\in\N_i} P_{g_{ij}^*} (p_i(t)-p_j(t)),\quad i\in\V,
\end{align}
where $P_{g_{ij}^*}=I_d-g_{ij}^*(g_{ij}^*)^\T $.
The matrix expression of controller \eqref{eq_formationControlLaw_leaderless} is
\begin{align}\label{eq_formationControlLawMatrix_leaderless}
    \dot{p}(t)=-\L(\G,g^*) p(t),
\end{align}
where $\L(\G,g^*)\in\R^{dn\times dn}$ is a matrix-weighted Laplacian.
In particular, let $[\L(\G,g^*)]_{ij}$ be the $ij$th block of size $d\times d$ in $\L(\G,g^*)$, then
\begin{align}\label{eq_laplacian_projectionMatrix}
    [\L(\G,g^*)]_{ij}&=-P_{g_{ij}^*}, \quad j\ne i, \nonumber\\
    [\L(\G,g^*)]_{ii}&=\sum_{j\in\N_i}P_{g_{ij}^*}, \quad i\in\V.
\end{align}
When the context is clear, we simply write $\L(\G,g^*)$ as $\L$.

\subsubsection{Leader-Follower Case}
As will be shown later, controller~\eqref{eq_formationControlLawMatrix_leaderless} can successfully solve Problem~\ref{problem_bearingonlyformationcontrol} but cannot the centroid or the scale of the formation.
Motivated by this, we introduce fixed leaders and propose a leader-follower controller.
Suppose there are $n_l$ ($0\le n_l \le n$) fixed agents which are called \emph{leaders}.
The rest $(n-n_l)$ agents are called \emph{followers}.
Denote $\V_l$ and $\V_f$ as the index sets for the leaders and followers, respectively.
Note $\V_l \cup \V_f = \V$, $|\V_l|=n_l$, and $|\V_f|=n-n_l$.
When $n_l=0$, it will be the same as the above leaderless case.
The leader-follower formation controller is designed as
\begin{align}\label{eq_formationControlLaw_leaderFollower}
    \dot{p}_i(t)&=0, \quad i\in\V_l, \nonumber\\
    \dot{p}_i(t)&=-\sum_{j\in\N_i} P_{g_{ij}^*} (p_i(t)-p_j(t)),\quad i\in\V_f.
\end{align}
Assume without loss of generality that the first $n_l$ agents are leaders and the rest are followers.
As a result, $\V_l=\{1,\dots,n_l\}$ and $\V_f=\{n_l+1,\dots,n\}$.
Denote $p_l=[p_1^\T ,\dots,p_{n_l}^\T ]^\T \in\R^{dn_l}$ and $p_f=[p_{n_l+1}^\T ,\dots,p_{n}^\T ]^\T \in\R^{d(n-n_l)}$.
Partition $\L$ into the following form
\begin{align*}
    \L=\left[
         \begin{array}{cc}
           \L_{ll} & \L_{lf} \\
           \L_{fl} & \L_{ff} \\
         \end{array}
       \right],
\end{align*}
where $\L_{ll}\in\R^{dn_l\times dn_l}$, $\L_{lf}=\L_{fl}^\T \in\R^{dn_l\times d(n-n_l)}$, and $\L_{ff}\in\R^{d(n-n_l)\times d(n-n_l)}$.
Then, the matrix expression of controller \eqref{eq_formationControlLaw_leaderFollower} is
\begin{align}\label{eq_formationControlLawMatrix_leaderFollower}
    \dot{p}_l(t)&=0, \nonumber \\
    \dot{p}_f(t)&=-\L_{ff}p_f(t)-\L_{fl}p_l,
\end{align}
where $p_l$ is constant since the leaders are stationary.

\subsection{Bearing-Based Network Localization}

In this subsection, we consider the problem of bearing-based network localization, the aim of which is to estimate the positions of the agents in a network merely using bearing measurements.
Agent $i$ maintains an estimate $\hat{p}_i$ of its own fixed position $p_i$.
Agent $i$ can measure the bearings of its neighbors, $\{g_{ij}\}_{j\in\N_i}$, and can also obtain the estimates of its neighbors (via communication), $\{\hat{p}_j\}_{j\in\N_i}$.
The estimation update law for agent $i$ has the form of $\dot{\hat{p}}_i(t) = u_i(t)$, where $u_i(t)\in\R^d$ is the input to be designed.
The bearing-based network localization problem is stated as below.

\begin{problem}[Bearing-Based Network Localization]\label{problem_bearingbasednetworkLocalization}
    Suppose $p$ is constant. Given an initial estimate $\hat{p}(0)$, design $u_i(t)$ ($i\in\V$) based on the relative estimates $\{\hat{p}_i(t)-\hat{p}_j(t)\}_{j\in\N_i}$ and the constant bearing measurements $\{g_{ij}\}_{j\in\N_i}$ such that $\hat{p}_i(t)\rightarrow p_i$ as $t\rightarrow\infty$ for all $i\in\V$.
\end{problem}

Suppose a subset of the agents, known as \emph{anchors}, can measure their own real positions; the remaining agents are called \emph{followers}.
We do not consider the anchorless case here because the network cannot be localized without anchors.
Suppose there are $n_a$ ($0\le n_a \le n$) anchors and $n-n_a$ followers.
Denote $\V_a$ and $\V_f$ as the index sets for the anchors and followers, respectively.
Then $\V_a \cup \V_f = \V$, $|\V_a|=n_a$, and $|\V_f|=n-n_a$.
The anchor-follower estimator is designed as
\begin{align}\label{eq_networkLocalize_achor}
    \dot{p}_i(t)&=0, \quad i\in\V_a, \nonumber\\
    \dot{p}_i(t)&=-\sum_{j\in\N_i} P_{g_{ij}} (\hat{p}_i(t)-\hat{p}_j(t)),\quad i\in\V_f.
\end{align}
It is notable that the above estimator has the {same} formula as the controller~\eqref{eq_formationControlLaw_leaderFollower}.
Define $\L(\G,g)\in\R^{dn\times dn}$ with
\begin{align}\label{eq_laplacian_projectionMatrix_netEst}
    [\L(\G,g)]_{ij}&=-P_{g_{ij}}, \quad j\ne i, \nonumber\\
    [\L(\G,g)]_{ii}&=\sum_{j\in\N_i}P_{g_{ij}}, \quad i\in\V.
\end{align}
When the context is clear, we simply write $\L(\G,g)$ as $\L$.
Without loss of generality, assume the first $n_a$ agents are anchors and the others are followers.
As a result, $\V_a=\{1,\dots,n_a\}$ and $\V_f=\{n_a+1,\dots,n\}$.
Denote $p_a=[p_1^\T ,\dots,p_{n_a}^\T ]^\T \in\R^{dn_a}$ and $p_f=[p_{n_a+1}^\T ,\dots,p_{n}^\T ]^\T \in\R^{d(n-n_a)}$.
Partition $\L$ into the following form
\begin{align*}
    \L=\left[
         \begin{array}{cc}
           \L_{aa} & \L_{af} \\
           \L_{fa} & \L_{ff} \\
         \end{array}
       \right],
\end{align*}
where $\L_{aa}\in\R^{dn_a\times dn_a}$, $\L_{af}=\L_{fa}^\T \in\R^{dn_a\times d(n-n_a)}$, and $\L_{ff}\in\R^{d(n-n_a)\times d(n-n_a)}$.
Then, it is straightforward to see the matrix expression of controller \eqref{eq_networkLocalize_achor} is
\begin{align}\label{eq_networkLocalizeMatrix_achor}
    \dot{\hat{p}}_a(t)&=0, \nonumber\\
    \dot{\hat{p}}_f(t)&=-\L_{ff}\hat{p}_f(t)-\L_{fa}p_a,
\end{align}
where $p_a$ is constant and known.

\section{Convergence Analysis}\label{section_convergenceAnalysis}

In this section, we analyze the convergence of the proposed controllers and estimators.
Since the network localization has the same form as the formation control, we will mainly focus on the convergence analysis of the formation control and the convergence results for the network localization are given without proofs.

\subsection{Convergence Analysis of Formation Control}

In order to prove the convergence of the proposed controllers, we adopt the following assumption.

\begin{assumption}\label{assumption_IBR}
    Any formation $\G(p^*)$ that satisfies the bearing constraints $\{g_{ij}^*\}_{(i,j)\in\E}$ is infinitesimally bearing rigid.
\end{assumption}

Assumption~\ref{assumption_IBR} gives two useful conditions.
By Theorem~\ref{theorem_IBRImplyUniqueShape}, the first condition is that the target formation specified by the bearing constraints has a unique shape.
By Theorem~\ref{theorem_conditionInfiParaRigid}, the second condition is a mathematical condition that $\rank(R_B(p^*))=dn-d-1$ and $\Null(R_B(p^*))=\myspan\{\one\otimes I_d, p^*\}$ where $R_B(p^*)=\dia{P_{g_k^*}/\|e_k^*\|}\bar{H}$ is the bearing rigidity matrix.
Since the distance term $\|e_k^*\|$ in $R_B(p^*)$ does not affect its rank or null space, the condition given by Assumption~\ref{assumption_IBR} actually is
\begin{align*}
    \Null(\dia{P_{g_k^*}}\bar{H})=\myspan\{\one\otimes I_d, p^*\}.
\end{align*}
This condition will be crucial to the following convergence analysis.

\subsubsection{Leaderless Case}

We first consider the case without leaders and study the formation dynamics \eqref{eq_formationControlLawMatrix_leaderFollower}.
Since the system \eqref{eq_formationControlLawMatrix_leaderFollower} is linear and time-invariant, its convergence is totally determined by the spectrum of $\L$.
The next result characterizes the rank and null space of $\L$.

\begin{lemma}\label{lemma_FormationLaplacian_nullspace}
    Under Assumption~\ref{assumption_IBR}, the $\L$ defined in \eqref{eq_laplacian_projectionMatrix} is symmetric positive semi-definite. Moreover, it satisfies $\rank(\L)=dn-d-1$ and
    \begin{align*}
        \Null(\L)=\myspan\{\one\otimes I_d, p^*\},
    \end{align*}
    where $p^*$ is an arbitrary configuration satisfying $\{g_{ij}^*\}_{(i,j)\in\E}$.
\end{lemma}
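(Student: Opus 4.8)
The plan is to recognize $\L$ as a Gram matrix built from the (unscaled) bearing rigidity matrix, which at once yields symmetry, positive semi-definiteness, and a null space that can be read off from Assumption~\ref{assumption_IBR}. The first and central step is to establish the factorization
\[
  \L = \bar{H}^\T \dia{P_{g_k^*}} \bar{H}.
\]
I would verify this by a direct block computation: the $ij$th block of the right-hand side equals $\sum_k H_{ki} H_{kj} P_{g_k^*}$. For $i \ne j$ the only surviving term corresponds to the edge joining $i$ and $j$, contributing $-P_{g_k^*}$, and since the orthogonal projection is orientation-independent ($P_x = P_{-x}$) this is exactly $-P_{g_{ij}^*}$; for $i = j$ the sum runs over all edges incident to $i$, matching $\sum_{j\in\N_i} P_{g_{ij}^*}$. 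This reproduces the block definition \eqref{eq_laplacian_projectionMatrix}. The main (though routine) obstacle is precisely the bookkeeping here: keeping the incidence-matrix signs consistent with the edge orientation while exploiting the orientation-independence of $P_{g_{ij}^*}$.

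With the factorization in hand, set $\tilde{R} \triangleq \dia{P_{g_k^*}} \bar{H}$. Using $P_{g_k^*}^\T = P_{g_k^*}$ and $P_{g_k^*}^2 = P_{g_k^*}$, the identity above becomes $\L = \tilde{R}^\T \tilde{R}$. Consequently $\L$ is symmetric and positive semi-definite, and moreover $\Null(\L) = \Null(\tilde{R})$ and $\rank(\L) = \rank(\tilde{R})$, which reduces both remaining claims to statements about $\tilde{R}$.

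It then remains to identify $\Null(\tilde{R})$ and $\rank(\tilde{R})$. Observe that $\tilde{R}$ is exactly the bearing rigidity matrix $R_B(p^*) = \dia{P_{g_k^*}/\|e_k^*\|}\bar{H}$ with the positive scalars $1/\|e_k^*\|$ stripped off; since such left multiplication by invertible diagonal blocks changes neither the null space nor the rank, $\Null(\tilde{R}) = \Null(R_B(p^*))$ and $\rank(\tilde{R}) = \rank(R_B(p^*))$. By Assumption~\ref{assumption_IBR} the framework $\G(p^*)$ is infinitesimally bearing rigid, so Theorem~\ref{theorem_conditionInfiParaRigid} furnishes both $\rank(R_B(p^*)) = dn-d-1$ and $\Null(R_B(p^*)) = \myspan\{\one\otimes I_d, p^*\}$. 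Transferring these back through Step~2 gives $\rank(\L)=dn-d-1$ and $\Null(\L)=\myspan\{\one\otimes I_d, p^*\}$, completing the proof. (As a consistency check, the columns of $\one\otimes I_d$ span a $d$-dimensional space and the nondegenerate configuration $p^*$ is not among them, so the null space indeed has dimension $d+1$, matching the rank.)
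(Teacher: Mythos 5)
Your proof is correct and follows essentially the same route as the paper: factor $\L=\bar{H}^\T\dia{P_{g_k^*}}\bar{H}$, use symmetry and idempotence of the projections to write $\L=\tilde{R}^\T\tilde{R}$, and then identify $\tilde{R}$ with the bearing rigidity matrix stripped of the harmless scalars $1/\|e_k^*\|$ so that Assumption~\ref{assumption_IBR} and Theorem~\ref{theorem_conditionInfiParaRigid} give the rank and null space. Your version merely spells out two steps the paper leaves implicit (the block-wise verification of the factorization and the scaling-invariance of the null space), so there is nothing to fix.
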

\begin{proof}
    Since the graph is undirected, $\L$ can be written as $\L=\bar{H}^\T \mydiag(P_{g_k^*})\bar{H}$, which is clearly symmetric and positive semi-definite.
    Moreover, due to $P_{g_k^*}=P_{g_k^*}^\T P_{g_k^*}$, $\L$ can be rewritten as
    \begin{align*}
        \L=\underbrace{\bar{H}^\T \mydiag{(P_{g_k^*}^\T )}}_{\tilde{R}_B^\T }\underbrace{\mydiag{(P_{g_k})}\bar{H}}_{\tilde{R}_B}.
    \end{align*}
    By Assumption~\ref{assumption_IBR}, we have $\rank(\tilde{R}_B)=dn-d-1$ and $\Null(\tilde{R}_B)=\myspan\{\one\otimes I_d, p^*\}$.
    Since $\L$ has the same rank and null space as $\tilde{R}_B$, the proof is complete.
\end{proof}

Based on Lemma~\ref{lemma_FormationLaplacian_nullspace}, we can analyze the convergence of system \eqref{eq_formationControlLawMatrix_leaderless}.
Since $\L$ is symmetric, its left and right null spaces are the same.
Although $\{\one\otimes I_d, p^*\}$ is a basis of $\Null(\L)$ by Lemma~\ref{lemma_FormationLaplacian_nullspace}, it is not an orthogonal basis in general.
In order to obtain an orthogonal basis, we first define the \emph{formation centroid} (denoted $c(p)$), \emph{normalized formation} (denoted $r(p)$), and \emph{formation scale} (denoted $s(p)$) as
\begin{align*}
c(p)\triangleq \frac{\one^\T  p}{n}, \quad
r(p)\triangleq p-\one\otimes c(p), \quad
s(p)\triangleq\|r(p)\|.
\end{align*}
Note $r(p)$ is always orthogonal to $\one\otimes I_d$.
As a result, if denoting $r^*=p^*-\one\otimes c(p^*)$, we have $\{\one\otimes I_d, r^*\}$ is an orthogonal basis of $\Null(\L)$.
When the context is clear, we will simply write $c(p), r(p), s(p)$ as $c, r, s$.

\begin{theorem}[Convergence of Leaderless Control]\label{theorem_bearingFormation_leaderless_convergence}
Under Assumption~\ref{assumption_IBR}, the trajectory of system~\eqref{eq_formationControlLawMatrix_leaderless} converges exponentially from any initial point $p(0)$ to
\begin{align*}
p(\infty)= \one\otimes c(0)+ \left(\frac{(r^*)^\T }{\|r^*\|}p(0)\right) \frac{r^*}{\|r^*\|}.
\end{align*}
If $(r^*)^\T p(0)>0$, the leaderless controller~\eqref{eq_formationControlLaw_leaderless} successfully solves Problem~\ref{problem_bearingonlyformationcontrol}.
Furthermore, the centroid and the scale of the final formation are $c(\infty)=c(0)$ and $s(\infty)=\left|(r^*)^\T p(0)/\|r^*\|\right|$, respectively.
\end{theorem}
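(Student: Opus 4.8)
The plan is to solve the linear time-invariant system \eqref{eq_formationControlLawMatrix_leaderless} explicitly through the matrix exponential and to identify its limit with the orthogonal projection of $p(0)$ onto $\Null(\L)$. First I would write $p(t)=e^{-\L t}p(0)$. Since Lemma~\ref{lemma_FormationLaplacian_nullspace} guarantees that $\L$ is symmetric positive semi-definite with $\rank(\L)=dn-d-1$, I can orthogonally diagonalize $\L=\sum_i \lambda_i v_i v_i^\T$ with eigenvalues $\lambda_1\ge\cdots\ge\lambda_{dn-d-1}>0=\lambda_{dn-d}=\cdots=\lambda_{dn}$, so that $e^{-\L t}=\sum_i e^{-\lambda_i t}v_iv_i^\T$. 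Letting $t\to\infty$, every strictly positive mode decays while the zero modes persist, whence $e^{-\L t}\to\Pi$, the orthogonal projector onto $\Null(\L)$. Decomposing $p(0)$ into its components in $\Null(\L)$ and $\Range(\L)$ and using $e^{-\L t}\Pi=\Pi$ gives $\|p(t)-\Pi p(0)\|\le e^{-\lambda_{dn-d-1}t}\|p(0)\|$, which yields exponential convergence at a rate equal to the smallest nonzero eigenvalue of $\L$.

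Next I would compute $\Pi$ in closed form using the orthogonal basis $\{\one\otimes I_d,\,r^*\}$ of $\Null(\L)$ identified in the text. Because the columns of $\one\otimes I_d$ are mutually orthogonal and orthogonal to $r^*$, the projector splits as $\Pi=(\one\otimes I_d)[(\one\otimes I_d)^\T(\one\otimes I_d)]^{-1}(\one\otimes I_d)^\T+r^*(r^*)^\T/\|r^*\|^2$. Using $(\one\otimes I_d)^\T(\one\otimes I_d)=nI_d$, the first term equals $\tfrac1n(\one\one^\T)\otimes I_d$, whose action on $p(0)$ returns $\one\otimes c(0)$ by the definition of the centroid; the second term applied to $p(0)$ gives exactly $\big((r^*)^\T p(0)/\|r^*\|\big)\,r^*/\|r^*\|$. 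Adding the two reproduces the stated formula for $p(\infty)$.

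With the limit in hand, the remaining claims follow. Writing $r^*=p^*-\one\otimes c(p^*)$, the limit reads $p(\infty)=\one\otimes c(0)+\big((r^*)^\T p(0)/\|r^*\|^2\big)\,(p^*-\one\otimes c(p^*))$, i.e. $p(\infty)$ is a translate of a scalar multiple of $p^*$. When $(r^*)^\T p(0)>0$ the scaling factor is strictly positive, and since translation and positive scaling preserve bearings, $p(\infty)$ realizes the constraints $g_{ij}^*$ (this is precisely the freedom described by Theorem~\ref{theorem_IBRImplyUniqueShape}). As the scale is then nonzero, neighboring points remain distinct, so the bearing map is continuous near $p(\infty)$ and $g_{ij}(t)\to g_{ij}^*$, solving Problem~\ref{problem_bearingonlyformationcontrol}. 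Finally, applying $(\one\otimes I_d)^\T$ to $p(\infty)$ and using $(\one\otimes I_d)^\T r^*=0$ gives $c(\infty)=c(0)$, while $r(p(\infty))=p(\infty)-\one\otimes c(0)=\big((r^*)^\T p(0)/\|r^*\|\big)\,r^*/\|r^*\|$ gives $s(\infty)=\big|(r^*)^\T p(0)/\|r^*\|\big|$.

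The step I expect to be most delicate is the sign and scaling discussion: one must argue that only a strictly positive scaling factor reproduces the bearings $g_{ij}^*$, whereas a negative factor would reverse every bearing to $-g_{ij}^*$, which is exactly where the hypothesis $(r^*)^\T p(0)>0$ enters. One must also confirm non-collision in the limit so that the bearings are well defined and the passage from $p(t)\to p(\infty)$ to $g_{ij}(t)\to g_{ij}^*$ is genuinely justified by continuity of the bearing function.
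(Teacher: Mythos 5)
Your proof is correct and takes essentially the same route as the paper's: both identify the limit of the linear time-invariant system as the orthogonal projection of $p(0)$ onto $\Null(\L)=\myspan\{\one\otimes I_d,\,r^*\}$ and evaluate that projection using the orthogonality of the columns of $\one\otimes I_d$ and $r^*$, obtaining the same closed-form $p(\infty)$. You simply make explicit two steps the paper compresses---the spectral-decomposition argument behind ``by the linear system theory'' (giving the exponential rate as the smallest nonzero eigenvalue of $\L$) and the continuity/non-collision justification that $p(t)\to p(\infty)$ implies $g_{ij}(t)\to g_{ij}^*$---which is a sound elaboration rather than a different method.
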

\begin{proof}
Denote $A=[\one\otimes I_d,r^*]\in\R^{dn\times(d+1)}$. By the linear system theory, the trajectory $p(t)$ of system~\eqref{eq_formationControlLawMatrix_leaderless} converges to the orthogonal projection of $p(0)$ onto $\Range(A)$:
\begin{align*}
p(\infty)
&=A(A^\T A)^{-1}A^\T p(0) \\
&= (\one\otimes I_d)\left((\one\otimes I_d)^\T (\one\otimes I_d)\right)^{-1}(\one\otimes I_d)^\T  p(0) \\
&\qquad + \frac{r^*(r^*)^\T }{(r^*)^\T r^*}p(0) \\
&= \one\otimes c(0)+ \frac{(r^*)^\T p(0)}{\|r^*\|} \frac{r^*}{\|r^*\|}.
\end{align*}
It is easy to verify that the centroid and the scale of the final formation are $c(\infty)=c(0)$ and $s(\infty)=\left|(r^*)^\T p(0)/\|r^*\|\right|$, respectively.
The final formation $p(\infty)$ can be obtained by translating and scaling $r^*$.
Since $\G(r^*)$ satisfies all the bearing constraints, $\G(p(\infty))$ also satisfies as long as the scaling factor $(r^*)^\T p(0)/\|r^*\|$ is positive.
\end{proof}

Several remarks regarding Theorem~\ref{theorem_bearingFormation_leaderless_convergence} are given here.
\begin{enumerate}[(a)]
\item When $(r^*)^\T p(0)<0$, the formation converges to a final formation with the bearings as $g_{ij}=-g^*_{ij},\forall (i,j)\in\E$ instead of $g_{ij}=g^*_{ij},\forall (i,j)\in\E$. In this case, although the final formation has the opposite bearings as desired, it can be viewed as a point reflection of the target formation and has the same shape.
\item The centroid of the final formation is the same as that of the initial formation.
In fact, it follows from $(\one\otimes I_d)^\T  \dot{p}=0$ that the centroid of the formation is invariant under controller~\eqref{eq_formationControlLaw_leaderless}.
\item Although the centroid is invariant, the scale of the formation is changed under controller~\eqref{eq_formationControlLaw_leaderless}.
Specifically, the scale of the final formation satisfies
\begin{align*}
0\le s(\infty) \le s(0).
\end{align*}
The scale of the final formation is no larger than that of the initial formation.
It is clear that the lower bound of $s(\infty)$ is achieved when $(r^*)^\T p(0)=0$.
In this case, the formation will finally reach rendezvous (i.e., consensus in terms of position).
In order to obtain the upper bound, rewrite $(r^*)^\T p(0)=(r^*)^\T (p(0)-\one\otimes c(p(0)))=(r^*)^\T  r(0)$.
Then $s(\infty)=|(r^*)^\T r(0)/\|r^*\||\le\|r(0)\|=s(0)$.
As a result, the upper bound of $s(\infty)$ is achieved when $r^*$ is parallel to $p(0)$ or $r(0)$.
\end{enumerate}

Fig.~\ref{fig_demoControlLaw} shows the simplest example to demonstrate controller~\eqref{eq_formationControlLaw_leaderless}.
In this example, the target formation is vertical.
The final position of each agent is the orthogonal projection of its initial position to the target bearing.
Moreover, the centroid of the formation is invariant but the scale is changed.
It is intuitively obvious that if the initial formation is horizontal, the two agents will reach rendezvous.

\begin{figure}
  \centering
  \includegraphics[width=0.4\linewidth]{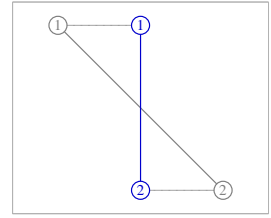}
  \caption{The simplest example to demonstrate the geometric interpretation of the leaderless controller~\eqref{eq_formationControlLaw_leaderless}. Initial formation: gray; target/final formation: blue; agent trajectory: dotted line.}
  \label{fig_demoControlLaw}
\end{figure}

\subsubsection{Leader-Follower Case}

Although the controller~\eqref{eq_formationControlLaw_leaderless} is able to solve Problem~\ref{problem_bearingonlyformationcontrol}, the centroid and the scale of the finally converged formation are determined by the initial formation, which is usually undesired in practice.
We next show that the centroid and the scale of the formation can be controlled by the leader-follower controller~\eqref{eq_formationControlLaw_leaderFollower}.
We first analyze the properties of $\L_{ff}$ in the leader-follower controller.

\begin{lemma}\label{lemma_formationControl_LffPositiveDefinite}
    Under Assumption~\ref{assumption_IBR}, $\L_{ff}$ in system~\eqref{eq_formationControlLaw_leaderFollower} is positive definite if and only if $n_l\ge2$.
\end{lemma}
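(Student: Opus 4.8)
The plan is to exploit that $\L$ is symmetric positive semi-definite with the null space identified in Lemma~\ref{lemma_FormationLaplacian_nullspace}, and to reduce the definiteness of the principal submatrix $\L_{ff}$ to a question about which null vectors of $\L$ vanish on the leader block. First I would observe that $\L_{ff}$ is a principal submatrix of the positive semi-definite matrix $\L$, hence $\L_{ff}$ is itself positive semi-definite; the only thing left to decide is whether it has a nontrivial kernel. To this end, embed any $x_f\in\R^{d(n-n_l)}$ as $x=[0^\T,x_f^\T]^\T$ with zeros on the leader block, so that $x^\T\L x=x_f^\T\L_{ff}x_f$. Since $\L$ is positive semi-definite, one has $x_f^\T\L_{ff}x_f=0$ if and only if $\L x=0$, i.e. $x\in\Null(\L)$. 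Consequently, $\L_{ff}$ is positive definite if and only if the only vector in $\Null(\L)$ whose leader block is zero is $x=0$.

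Next I would make this condition explicit using $\Null(\L)=\myspan\{\one\otimes I_d,p^*\}$. Write a generic null vector as $x=(\one\otimes I_d)a+b p^*$ with $a\in\R^d$ and $b\in\R$; its $i$-th block equals $a+b p_i^*$. The leader block vanishes precisely when $a+b p_i^*=0$ for every leader $i\in\{1,\dots,n_l\}$. Because $\{\one\otimes I_d,p^*\}$ is a basis of $\Null(\L)$ (the rank condition $\rank(\L)=dn-d-1$ in Lemma~\ref{lemma_FormationLaplacian_nullspace} forces these $d+1$ vectors to be linearly independent), the map $(a,b)\mapsto(\one\otimes I_d)a+b p^*$ is injective, so $x=0$ is equivalent to $(a,b)=(0,0)$. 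Thus $\L_{ff}$ is positive definite if and only if the linear system $a+b p_i^*=0$, $i=1,\dots,n_l$, admits only the trivial solution $(a,b)=(0,0)$.

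Finally I would solve this system by cases. If $n_l=0$ there are no constraints and $(a,b)=(0,1)$ gives $x=p^*\ne0$; if $n_l=1$ the single equation $a=-b p_1^*$ leaves $b$ free and yields a nonzero $x$, whose follower blocks are $b(p_j^*-p_1^*)$, not all zero since the points are distinct. Hence in both cases $\L_{ff}$ is not positive definite. If $n_l\ge2$, subtracting the equations for two leaders located at distinct positions gives $b(p_i^*-p_j^*)=0$, which forces $b=0$ and then $a=0$; only the trivial solution survives, so $\L_{ff}$ is positive definite.

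The main obstacle, and the point requiring care, is this last step: what actually annihilates $b$ is the existence of \emph{two leaders at distinct positions} rather than the cardinality $n_l\ge2$ by itself. I would therefore invoke the standing assumption that the configuration $p^*$ has distinct points (the same distinctness that underlies the nullity-$(d+1)$ count, i.e. that $p^*$ is not a multiple of $\one\otimes I_d$) to identify the two conditions, so that $n_l\ge2$ does guarantee two distinct leader positions and hence positive definiteness.
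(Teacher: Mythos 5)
Your proposal is correct and takes essentially the same route as the paper: show $\L_{ff}$ is positive semi-definite as a principal submatrix of $\L$, observe that a kernel vector of $\L_{ff}$ padded with zeros on the leader block must lie in $\Null(\L)=\myspan\{\one\otimes I_d,p^*\}$, and then do a case analysis on $n_l$ using the parametrization $(\one\otimes I_d)a+bp^*$. The distinctness issue you flag at the end is exactly the point the paper disposes of with the bare assertion ``$p_i\ne p_j$ for all $i\ne j$,'' so your more explicit treatment (requiring two leaders at \emph{distinct} positions, not merely $n_l\ge2$) fills in, rather than deviates from, the paper's argument.
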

\begin{proof}
For any $x\in\R^{dn_l}$, since $\L\ge0$, we have
\begin{align*}
x^\T  \L_{ff} x =\left[
                  \begin{array}{cc}
                    0 & x^\T  \\
                  \end{array}
                \right]
                \left[
                  \begin{array}{cc}
                    \L_{ll} & \L_{lf} \\
                    \L_{fl} & \L_{ff} \\
                  \end{array}
                \right]
                \left[
                  \begin{array}{c}
                    0 \\
                    x \\
                  \end{array}
                \right]\ge0.
\end{align*}
As a result, $\L_{ff}$ is at least positive semi-definite.
If there exists a nonzero vector $x$ such that $x^\T L_{ff}x=0$, then $[0,x^\T ]^\T \in\Null(\L)=\myspan\{\one\otimes I_d, r^*\}$.
If there is only one leader ($n_l=1$), it is easy to see that such $x$ exists.
However, if there are more than one leaders ($n_l\ge2$), such $x$ does not exist because $p_i\ne p_j$ for all $i\ne j$.
Thus, in the case of $n_l\ge2$, $L_{ff}$ is positive definite.
\end{proof}

When $n_l\ge2$, the positions of the leaders, $p_l$, must be \emph{feasible} such that the followers together with the leaders can possibly form a formation satisfying the bearing constraints.
The following is a necessary condition for a feasible $p_l$.

\begin{lemma}\label{lemma_formationControl_feasibleLeader}
    Under Assumption~\ref{assumption_IBR}, a feasible $p_l$ satisfies
    \begin{align*}
        \left(\L_{ll}-\L_{lf}\L_{ff}^{-1}\L_{fl}\right) p_l=0.
    \end{align*}
\end{lemma}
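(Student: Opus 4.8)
The plan is to exploit the fact that a feasible leader configuration $p_l$ can, by definition, be completed to a full configuration $p=[p_l^\T,p_f^\T]^\T$ satisfying all the bearing constraints, and that any such $p$ must lie in $\Null(\L)$. The crucial first step is therefore to verify that a feasible $p$ annihilates $\L$. I would write the $i$th block of $\L p$ as $\sum_{j\in\N_i}P_{g_{ij}^*}(p_i-p_j)$ and observe that feasibility forces $p_i-p_j$ to be parallel to $g_{ij}^*$, so each summand vanishes because $P_{g_{ij}^*}g_{ij}^*=0$. Hence $\L p=0$. Equivalently, this follows directly from Lemma~\ref{lemma_FormationLaplacian_nullspace}: a feasible $p$ is exactly a configuration of the type described there, so it lies in $\myspan\{\one\otimes I_d,p^*\}=\Null(\L)$.

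Next I would partition the relation $\L p=0$ along the leader--follower split. Using the block form of $\L$, this yields the two equations $\L_{ll}p_l+\L_{lf}p_f=0$ and $\L_{fl}p_l+\L_{ff}p_f=0$. Since the relevant regime here is $n_l\ge2$, Lemma~\ref{lemma_formationControl_LffPositiveDefinite} guarantees that $\L_{ff}$ is positive definite and hence invertible. Solving the second block equation gives $p_f=-\L_{ff}^{-1}\L_{fl}p_l$, and substituting this into the first block equation produces $(\L_{ll}-\L_{lf}\L_{ff}^{-1}\L_{fl})p_l=0$, which is precisely the claimed identity, the matrix on the left being the Schur complement of $\L_{ff}$ in $\L$.

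The argument is elementary once the correct observation is in place; the only genuinely substantive step is the opening claim that feasibility implies $\L p=0$, which is what links the geometric notion of ``satisfying the bearing constraints'' to the algebraic null-space characterization of $\L$ obtained in Lemma~\ref{lemma_FormationLaplacian_nullspace}. Everything afterward is a standard Schur-complement elimination, and the invertibility of $\L_{ff}$ required to carry it out is already supplied by Lemma~\ref{lemma_formationControl_LffPositiveDefinite} under the standing assumption $n_l\ge2$.
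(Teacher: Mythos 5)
Your proof is correct and follows the same overall skeleton as the paper's: a feasible $p_l$ extends to a configuration $p=[p_l^\T,p_f^\T]^\T$ satisfying all bearing constraints, this $p$ lies in $\Null(\L)$, and the block elimination of $p_f$ (valid since $\L_{ff}$ is invertible by Lemma~\ref{lemma_formationControl_LffPositiveDefinite} when $n_l\ge2$) yields the Schur-complement identity. The one place where you genuinely diverge is the justification of $\L p=0$. The paper invokes Theorem~\ref{theorem_IBRImplyUniqueShape}: since $p$ satisfies the same constraints as $p^*$ and the target formation is infinitesimally bearing rigid, $p$ must be a translation and scaling of $r^*$, hence $p\in\myspan\{\one\otimes I_d,r^*\}=\Null(\L)$ by Lemma~\ref{lemma_FormationLaplacian_nullspace}. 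You instead verify $\L p=0$ directly: feasibility makes $p_i-p_j$ collinear with $g_{ij}^*$, so every term $P_{g_{ij}^*}(p_i-p_j)$ in the $i$th block of $\L p$ vanishes because $P_{g_{ij}^*}g_{ij}^*=0$. Your computation is more elementary and isolates exactly where rigidity is needed: the identity $\L p=0$ holds for \emph{any} constraint-satisfying configuration, with or without Assumption~\ref{assumption_IBR}, and infinitesimal bearing rigidity enters the lemma only through the invertibility of $\L_{ff}$. The paper's route, by contrast, threads this step through the rigidity machinery (uniqueness of shape plus the null-space characterization), which is slightly more roundabout but keeps the entire argument phrased at the level of $\Null(\L)$; your second, alternative justification (taking the feasible $p$ itself as the configuration $p^*$ in Lemma~\ref{lemma_FormationLaplacian_nullspace}) recovers essentially that reading in one line.
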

\begin{proof}
If $p_l$ is feasible, there exists $p_f$ such that $p=[p_l^\T ,p_f^\T ]^\T $ satisfies the bearing constraints $\{g_{ij}^*\}_{(i,j)\in\E}$.
By Theorem~\ref{theorem_IBRImplyUniqueShape}, infinitesimal bearing rigidity can uniquely determine $p$ up to a translation and a scaling factor.
That means $p\in\myspan\{\one\otimes I_d, r^*\}=\Null(\L)$ and consequently
\begin{align*}
\left[
  \begin{array}{cc}
    \L_{ll} & \L_{lf} \\
    \L_{fl} & \L_{ff} \\
  \end{array}
\right]
\left[
  \begin{array}{c}
    p_l \\
    p_f \\
  \end{array}
\right]=0,
\end{align*}
which implies $\L_{ll}p_l+\L_{lf}p_f=0$ and $\L_{fl}p_l+\L_{ff}p_f=0$.
The second equation implies $p_f=-\L_{ff}^{-1}\L_{fl}p_l$, substituting which into the first equation completes the proof.
\end{proof}

Based on Lemmas~\ref{lemma_formationControl_LffPositiveDefinite} and \ref{lemma_formationControl_feasibleLeader}, we have the following convergence result for the leader-follower formation controller.

\begin{theorem}[Convergence of Leader-Follower Control]\label{theorem_FormationControl_leaderfollower}
    Under Assumption~\ref{assumption_IBR}, given $n_l\ge2$ and a feasible $p_l$, the trajectory of system~\eqref{eq_formationControlLawMatrix_leaderFollower} converges exponentially fast from any initial $p_f(0)$ to
    \begin{align*}
        p_f(\infty)=-\L_{ff}^{-1}\L_{fl}p_l.
    \end{align*}
    The finally converged formation satisfies the bearing constraints $\{g_{ij}^*\}_{(i,j)\in\E}$.
\end{theorem}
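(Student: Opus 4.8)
The plan is to treat the follower dynamics as an affine linear time-invariant system, exploit positive definiteness of $\L_{ff}$ for exponential stability, and then identify the resulting equilibrium with the unique feasible follower configuration.

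First I would observe that the follower equation $\dot{p}_f(t)=-\L_{ff}p_f(t)-\L_{fl}p_l$ in \eqref{eq_formationControlLawMatrix_leaderFollower} is an affine LTI system with constant forcing term $-\L_{fl}p_l$, since $p_l$ is held fixed by the leaders. Because $n_l\ge2$, Lemma~\ref{lemma_formationControl_LffPositiveDefinite} guarantees that $\L_{ff}$ is symmetric positive definite, hence invertible, so $-\L_{ff}$ is Hurwitz with real, strictly negative eigenvalues. Standard linear systems theory then yields that $p_f(t)$ converges exponentially fast from any $p_f(0)$ to the unique equilibrium, obtained by setting $\dot{p}_f=0$: solving $\L_{ff}p_f(\infty)=-\L_{fl}p_l$ gives $p_f(\infty)=-\L_{ff}^{-1}\L_{fl}p_l$. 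This establishes the convergence claim and the formula for the limit.

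It remains to verify that the limiting configuration $p(\infty)=[p_l^\T,p_f(\infty)^\T]^\T$ satisfies the bearing constraints, and here I would invoke feasibility of $p_l$ directly. By definition there exists some $p_f'$ such that $[p_l^\T,(p_f')^\T]^\T$ satisfies $\{g_{ij}^*\}_{(i,j)\in\E}$; as shown in the proof of Lemma~\ref{lemma_formationControl_feasibleLeader}, such a configuration lies in $\Null(\L)$ and therefore obeys $\L_{fl}p_l+\L_{ff}p_f'=0$. Since $\L_{ff}$ is invertible, this equation has the unique solution $p_f'=-\L_{ff}^{-1}\L_{fl}p_l$, which is exactly $p_f(\infty)$. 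Hence the limiting configuration coincides with a feasible one and satisfies the bearing constraints.

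The convergence and equilibrium computation are essentially routine once positive definiteness of $\L_{ff}$ is in hand, so the only genuinely substantive step is the last one; even this is short, its crux being the observation that the feasibility relation from Lemma~\ref{lemma_formationControl_feasibleLeader} and the dynamic equilibrium are governed by the \emph{same} linear equation $\L_{ff}p_f=-\L_{fl}p_l$, whose solution is unique precisely because $n_l\ge2$ forces $\L_{ff}$ to be positive definite. I would be careful to flag that without $n_l\ge2$ the matrix $\L_{ff}$ is only positive semi-definite, so the equilibrium is neither unique nor exponentially attracting and the identification fails, which is exactly why the hypothesis $n_l\ge2$ is indispensable.
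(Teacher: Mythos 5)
Your proof is correct. Its first half---exponential stability of the affine LTI follower dynamics via Lemma~\ref{lemma_formationControl_LffPositiveDefinite} and the equilibrium computation $p_f(\infty)=-\L_{ff}^{-1}\L_{fl}p_l$---is exactly the paper's argument. You diverge at the final step. The paper invokes the \emph{stated conclusion} of Lemma~\ref{lemma_formationControl_feasibleLeader}, namely $(\L_{ll}-\L_{lf}\L_{ff}^{-1}\L_{fl})p_l=0$, to verify that $\L p(\infty)=0$, and then concludes from $p(\infty)\in\Null(\L)$ that the bearing constraints hold. You instead return to the definition of feasibility: the feasible configuration $[p_l^\T,(p_f')^\T]^\T$ lies in $\Null(\L)$ (the fact established inside the proof of Lemma~\ref{lemma_formationControl_feasibleLeader}), so $p_f'$ solves $\L_{ff}x=-\L_{fl}p_l$, and invertibility of $\L_{ff}$ forces $p_f'=p_f(\infty)$. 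This difference is small but real, and it works in your favor: the paper's last inference quietly skips a step, since membership in $\Null(\L)=\myspan\{\one\otimes I_d,p^*\}$ by itself would also allow $p(\infty)=\one\otimes c+\alpha p^*$ with $\alpha\le0$, i.e., a reflected or collapsed configuration whose bearings are reversed or undefined (the same sign issue that Theorem~\ref{theorem_bearingFormation_leaderless_convergence} must handle explicitly through the condition $(r^*)^\T p(0)>0$ in the leaderless case). Closing that gap requires precisely your uniqueness observation---that within $\Null(\L)$ the follower part is determined by the leader part once $\L_{ff}$ is invertible---so identifying the limit with an explicitly feasible configuration, as you do, is the tighter way to finish, at no extra cost.
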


\begin{proof}
Since $\L_{ff}>0$ if $n_l\ge2$ by Lemma~\ref{lemma_formationControl_LffPositiveDefinite}, it is obvious that the linear time-invariant system~\eqref{eq_formationControlLawMatrix_leaderFollower} is exponentially stable.
Then the final formation (i.e., the equilibrium) satisfying $\dot{p}_f=0$ is $p_f(\infty)=-\L_{ff}^{-1}\L_{fl}p_l$.
Since $p_l$ is feasible, it follows from Lemma~\ref{lemma_formationControl_feasibleLeader} that $\L p(\infty)=0$ where $p(\infty)=[p_l^\T ,p_f^\T (\infty)]^\T $.
Therefore, $\G(p(\infty))$ satisfies the bearing constraints.
\end{proof}

As shown in Theorem~\ref{theorem_FormationControl_leaderfollower}, the final formation $p_f(\infty)$ is a function of $p_l$.
As a result, we can control the centroid and the scale of the final formation $p(\infty)$ by choosing appropriate positions of the leaders $p_l$.

\subsection{Convergence Analysis of Network Localization}

We adopt the following assumption to analyze the convergence of system~\eqref{eq_networkLocalizeMatrix_achor}.

\begin{assumption}\label{assumption_IBR_netEst}
    The network $\G(p)$ is infinitesimally bearing rigid.
\end{assumption}

The properties of $\L$ defined in \eqref{eq_laplacian_projectionMatrix_netEst} are given below.

\begin{lemma}\label{lemma_netEst_Laplacian_nullspace}
    Under Assumption~\ref{assumption_IBR_netEst}, the $\L$ defined in \eqref{eq_laplacian_projectionMatrix_netEst} is symmetric positive semi-definite. Moreover, it satisfies $\rank(\L)=dn-d-1$ and
    \begin{align*}
        \Null(\L)=\myspan\{\one\otimes I_d, p\}.
    \end{align*}
\end{lemma}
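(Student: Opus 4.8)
The plan is to mirror the proof of Lemma~\ref{lemma_FormationLaplacian_nullspace} almost verbatim, since the matrix $\L$ in \eqref{eq_laplacian_projectionMatrix_netEst} has exactly the same block structure as the formation Laplacian in \eqref{eq_laplacian_projectionMatrix}, only with the target bearings $g_{ij}^*$ replaced by the actual bearings $g_{ij}$ and Assumption~\ref{assumption_IBR} replaced by Assumption~\ref{assumption_IBR_netEst}.

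First I would record the incidence-matrix factorization. Because the graph is undirected and $P_x$ depends only on $x/\|x\|$ with $P_{-x}=P_x$, the off-diagonal blocks obey $P_{g_{ij}}=P_{g_{ji}}$, so the block structure in \eqref{eq_laplacian_projectionMatrix_netEst} is precisely $\L=\bar{H}^\T\mydiag(P_{g_k})\bar{H}$. Since each $P_{g_k}\ge0$, this form immediately shows that $\L$ is symmetric and positive semi-definite.

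Next I would exploit the idempotency of the projections. Because each $P_{g_k}$ is an orthogonal projection, $P_{g_k}=P_{g_k}^\T P_{g_k}$, and hence $\L=\tilde{R}_B^\T\tilde{R}_B$ with $\tilde{R}_B\triangleq\mydiag(P_{g_k})\bar{H}$. For any real matrix one has $\Null(M^\T M)=\Null(M)$, which follows from $\|Mx\|^2=x^\T M^\T M x$, and therefore $\rank(M^\T M)=\rank(M)$; thus $\L$ inherits the rank and the null space of $\tilde{R}_B$.

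Finally I would invoke the rigidity hypothesis. The matrix $\tilde{R}_B$ differs from the bearing rigidity matrix $R_B(p)=\mydiag(P_{g_k}/\|e_k\|)\bar{H}$ only through the positive diagonal scalings $1/\|e_k\|$, which alter neither rank nor null space. Under Assumption~\ref{assumption_IBR_netEst}, $\G(p)$ is infinitesimally bearing rigid, so Theorem~\ref{theorem_conditionInfiParaRigid} gives $\rank(\tilde{R}_B)=dn-d-1$ and $\Null(\tilde{R}_B)=\myspan\{\one\otimes I_d,p\}$; combined with the preceding step this yields the stated rank and null space of $\L$. I expect no genuine obstacle here, as the argument is a direct transcription of the formation-control case; the only points worth stating explicitly are the standard identity $\Null(M^\T M)=\Null(M)$ and the observation, already used after Assumption~\ref{assumption_IBR}, that the positive factors $\|e_k\|^{-1}$ are irrelevant to rank and null space.
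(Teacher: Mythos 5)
Your proposal is correct and follows exactly the paper's approach: the paper proves this lemma by noting it is ``similar to Lemma~\ref{lemma_FormationLaplacian_nullspace},'' whose proof is precisely your argument --- factor $\L=\bar{H}^\T \mydiag(P_{g_k})\bar{H}$, use idempotency to write $\L=\tilde{R}_B^\T\tilde{R}_B$ with $\tilde{R}_B=\mydiag(P_{g_k})\bar{H}$, and transfer the rank and null space from the bearing rigidity matrix via Theorem~\ref{theorem_conditionInfiParaRigid}. Your explicit statements of the identity $\Null(M^\T M)=\Null(M)$ and of the irrelevance of the factors $1/\|e_k\|$ are details the paper leaves implicit, but they do not change the route.
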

\begin{proof}
Similar to Lemma~\ref{lemma_FormationLaplacian_nullspace}.
\end{proof}

\begin{lemma}\label{lemma_NetEst_LffPositiveDefinite}
    Under Assumption~\ref{assumption_IBR_netEst}, $\L_{ff}$ in system~\eqref{eq_networkLocalizeMatrix_achor} is positive definite if and only if $n_a\ge2$.
\end{lemma}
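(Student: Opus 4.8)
The plan is to mirror the proof of Lemma~\ref{lemma_formationControl_LffPositiveDefinite}, with anchors playing the role of leaders and Lemma~\ref{lemma_netEst_Laplacian_nullspace} replacing Lemma~\ref{lemma_FormationLaplacian_nullspace}. First I would record that $\L_{ff}$ is always at least positive semi-definite: for any $x\in\R^{d(n-n_a)}$ the lift $y=[0^\T,x^\T]^\T$ satisfies $x^\T\L_{ff}x=y^\T\L y\ge0$, since $\L\ge0$ by Lemma~\ref{lemma_netEst_Laplacian_nullspace}. Hence positive definiteness is equivalent to the nonexistence of a nonzero $x$ with $x^\T\L_{ff}x=0$, and the whole statement reduces to analyzing when such an $x$ can occur.

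For the \emph{if} direction ($n_a\ge2\Rightarrow\L_{ff}>0$), suppose $x^\T\L_{ff}x=0$ for some $x$. Then $y^\T\L y=0$ with $y=[0^\T,x^\T]^\T$, and since $\L\ge0$ this forces $y\in\Null(\L)=\myspan\{\one\otimes I_d,p\}$ by Lemma~\ref{lemma_netEst_Laplacian_nullspace}. Thus $y=(\one\otimes I_d)a+pb$ for some $a\in\R^d$ and $b\in\R$, so the $i$th block of $y$ equals $a+bp_i$. Because the anchor blocks of $y$ vanish, $a+bp_i=0$ for every anchor $i\in\V_a$. When $n_a\ge2$, two distinct anchors $i,j$ give $b(p_i-p_j)=0$; since the agents occupy distinct positions ($p_i\ne p_j$ for $i\ne j$), this forces $b=0$ and then $a=0$, so $y=0$ and $x=0$. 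Hence the null space of $\L_{ff}$ is trivial and $\L_{ff}>0$.

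For the \emph{only if} direction I would argue the contrapositive by exhibiting a nonzero null vector of $\L_{ff}$ whenever $n_a\le1$. If $n_a=0$ then $\L_{ff}=\L$, which is singular by Lemma~\ref{lemma_netEst_Laplacian_nullspace}. If $n_a=1$, the single anchor constraint $a+bp_1=0$ is satisfied by $b=1$, $a=-p_1$, so $y=p-\one\otimes p_1\in\Null(\L)$ has a vanishing anchor block while its follower blocks are $p_i-p_1\ne0$; its follower part $x$ is then a nonzero vector with $\L_{ff}x=0$. Therefore $\L_{ff}$ fails to be positive definite when $n_a\le1$, which together with the previous paragraph establishes the equivalence.

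The quadratic-form bookkeeping and the partitioning are routine; the step deserving care is the \emph{only if} direction, where I must confirm that the constructed $y$ has a genuinely nonzero follower component. This is precisely where distinctness of the agent positions ($p_i\ne p_1$ for followers $i$) enters, and it is the hinge that separates the singular case $n_a=1$ from the regular case $n_a\ge2$, so I would make that appeal explicit rather than leave it to the reader.
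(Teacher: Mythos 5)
Your proof is correct and follows exactly the route the paper intends: it is the proof of Lemma~\ref{lemma_formationControl_LffPositiveDefinite} transplanted to the anchor setting (lift $x$ to $y=[0^\T,x^\T]^\T$, use $\Null(\L)=\myspan\{\one\otimes I_d,p\}$ from Lemma~\ref{lemma_netEst_Laplacian_nullspace}, and invoke distinctness of positions), which is precisely what the paper means by ``Similar to Lemma~\ref{lemma_formationControl_LffPositiveDefinite}.'' Your only additions are making explicit the null-vector construction $y=p-\one\otimes p_1$ for $n_a=1$ (which the paper dismisses as ``easy to see'') and covering $n_a=0$; both are faithful elaborations, not a different argument.
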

\begin{proof}
Similar to Lemma~\ref{lemma_formationControl_LffPositiveDefinite}.
\end{proof}

By the above two lemmas, the positions of the anchors and followers satisfy the following condition.

\begin{lemma}\label{lemma_NetEst_anchorCondition}
    Under Assumption~\ref{assumption_IBR_netEst}, if $n_a\ge2$, then $p_a$ and $p_f$ satisfy
    \begin{align*}
        p_f=-\L_{ff}^{-1}\L_{fa}p_a.
    \end{align*}
\end{lemma}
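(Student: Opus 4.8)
The plan is to exploit the fact that the matrix-weighted Laplacian $\L$ in \eqref{eq_laplacian_projectionMatrix_netEst} is assembled from the \emph{true} bearings $g_{ij}$ of the fixed configuration $p$, so that $p$ itself is annihilated by $\L$. Concretely, for each edge the bearing $g_{ij}$ is parallel to $p_j-p_i$, whence $P_{g_{ij}}(p_i-p_j)=0$; computing the $i$th block of $\L p$ as $\sum_{j\in\N_i}P_{g_{ij}}(p_i-p_j)$ then gives $\L p=0$. Equivalently, this is immediate from Lemma~\ref{lemma_netEst_Laplacian_nullspace}, which under Assumption~\ref{assumption_IBR_netEst} yields $\Null(\L)=\myspan\{\one\otimes I_d,p\}$, and in particular $p\in\Null(\L)$.

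Next I would write $\L p=0$ and partition it according to the anchor/follower split $p=[p_a^\T,p_f^\T]^\T$:
\begin{align*}
\left[\begin{array}{cc}\L_{aa}&\L_{af}\\ \L_{fa}&\L_{ff}\end{array}\right]\left[\begin{array}{c}p_a\\p_f\end{array}\right]=0.
\end{align*}
Reading off the second block row gives $\L_{fa}p_a+\L_{ff}p_f=0$. Finally, since $n_a\ge2$, Lemma~\ref{lemma_NetEst_LffPositiveDefinite} guarantees that $\L_{ff}$ is positive definite and hence invertible, so left-multiplying by $\L_{ff}^{-1}$ and rearranging produces $p_f=-\L_{ff}^{-1}\L_{fa}p_a$, as claimed.

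This argument runs in exact parallel to the proof of Lemma~\ref{lemma_formationControl_feasibleLeader}, with anchors playing the role of leaders. I do not expect any genuine obstacle: the only substantive point is the observation that $\L$ is built from the \emph{exact} measured bearings, so that $\L p=0$ holds identically rather than approximately. This is precisely where infinitesimal bearing rigidity enters (through Lemma~\ref{lemma_netEst_Laplacian_nullspace}, to pin down $\Null(\L)$), together with the condition $n_a\ge2$ (through Lemma~\ref{lemma_NetEst_LffPositiveDefinite}, to make $\L_{ff}$ invertible). Both ingredients are already available, so the remaining steps are routine block-matrix algebra.
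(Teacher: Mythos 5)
Your proof is correct and takes essentially the same route as the paper: conclude $\L p=0$ from $p\in\Null(\L)$ (Lemma~\ref{lemma_netEst_Laplacian_nullspace}), read off the follower block row $\L_{fa}p_a+\L_{ff}p_f=0$, and invert $\L_{ff}$ using Lemma~\ref{lemma_NetEst_LffPositiveDefinite} under $n_a\ge2$. Your extra observation that $\L p=0$ already follows edge-by-edge from $P_{g_{ij}}(p_i-p_j)=0$ (so rigidity is only truly needed for the invertibility of $\L_{ff}$) is a nice elementary refinement, but otherwise your argument and the paper's coincide.
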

\begin{proof}
Since $\Null(\L)=\myspan\{\one\otimes I_d, p\}$ by Lemma~\ref{lemma_netEst_Laplacian_nullspace}, we have
\begin{align*}
\left[
  \begin{array}{cc}
    \L_{aa} & \L_{af} \\
    \L_{fa} & \L_{ff} \\
  \end{array}
\right]
\left[
  \begin{array}{c}
    p_a \\
    p_f \\
  \end{array}
\right]=0,
\end{align*}
which implies $\L_{fa}p_a+\L_{ff}p_f=0$.
Since $\L_{ff}$ is nonsingular given $n_a\ge2$, $p_f$ can be solved as $p_f=-\L_{ff}^{-1}\L_{fa}p_a$.
\end{proof}

The next is the main convergence result of the anchor-based estimator~\eqref{eq_networkLocalize_achor}.
\begin{theorem}[Convergence of Anchor-based Localization]
    Under Assumption~\ref{assumption_IBR_netEst}, if $n_a\ge2$, the trajectory of system~\eqref{eq_networkLocalizeMatrix_achor} converges exponentially fast from any initial estimate $\hat{p}_f(0)$ to
    \begin{align*}
        \hat{p}_f(\infty)=-\L_{ff}^{-1}\L_{fa}p_a=p_f.
    \end{align*}
    As a result, the estimator~\eqref{eq_networkLocalize_achor} successfully solves Problem~\ref{problem_bearingbasednetworkLocalization}.
\end{theorem}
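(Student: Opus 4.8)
The plan is to mirror the proof of Theorem~\ref{theorem_FormationControl_leaderfollower}, exploiting the fact that the anchor-follower estimator~\eqref{eq_networkLocalizeMatrix_achor} has exactly the same structure as the leader-follower controller~\eqref{eq_formationControlLawMatrix_leaderFollower}. First I would establish exponential stability of the follower dynamics. Since $\dot{\hat{p}}_a=0$, the anchors are stationary and $p_a$ enters only as a constant forcing term, so the sole dynamic part is $\dot{\hat{p}}_f(t)=-\L_{ff}\hat{p}_f(t)-\L_{fa}p_a$, a linear time-invariant system with state matrix $-\L_{ff}$. By Lemma~\ref{lemma_NetEst_LffPositiveDefinite}, the hypothesis $n_a\ge2$ guarantees $\L_{ff}>0$; hence every eigenvalue of $-\L_{ff}$ is strictly negative and the system is exponentially stable about its unique equilibrium.

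Next I would identify that equilibrium and match it to the ground truth. Setting $\dot{\hat{p}}_f=0$ and using the nonsingularity of $\L_{ff}$ (again from $n_a\ge2$), the unique equilibrium is $\hat{p}_f(\infty)=-\L_{ff}^{-1}\L_{fa}p_a$, to which every trajectory converges exponentially fast independent of $\hat{p}_f(0)$. The only substantive step is to show this equilibrium coincides with the true follower positions, which is precisely the content of Lemma~\ref{lemma_NetEst_anchorCondition}: under Assumption~\ref{assumption_IBR_netEst} with $n_a\ge2$, the true positions satisfy $p_f=-\L_{ff}^{-1}\L_{fa}p_a$. Combining the two displays yields $\hat{p}_f(\infty)=p_f$, so the estimates converge to the true positions and Problem~\ref{problem_bearingbasednetworkLocalization} is solved.

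I expect no genuine obstacle here, since the heavy lifting is already packaged into Lemmas~\ref{lemma_NetEst_LffPositiveDefinite} and~\ref{lemma_NetEst_anchorCondition}, both of which ultimately rest on the null-space characterization $\Null(\L)=\myspan\{\one\otimes I_d,p\}$ from Lemma~\ref{lemma_netEst_Laplacian_nullspace}. The one conceptual point worth emphasizing is \emph{why} the estimator settles at the correct configuration rather than at some other element of $\Null(\L)$: the bearings alone leave the translation and scaling degrees of freedom undetermined, and it is exactly the presence of $n_a\ge2$ anchors at distinct known locations that pins down these two trivial infinitesimal bearing motions, rendering $p$ the unique consistent solution. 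This is the analogue of the feasibility argument for leaders in the formation-control case, and it is the reason the $n_a\ge2$ condition is both necessary (via Lemma~\ref{lemma_NetEst_LffPositiveDefinite}) and sufficient for exact localization.
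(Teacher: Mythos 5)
Your proof is correct and follows essentially the same route as the paper: exponential stability from $\L_{ff}>0$ (Lemma~\ref{lemma_NetEst_LffPositiveDefinite}), identification of the unique equilibrium $-\L_{ff}^{-1}\L_{fa}p_a$, and the match with the true positions via Lemma~\ref{lemma_NetEst_anchorCondition}. Your closing remark on why the anchors pin down the translation and scaling ambiguity is a nice conceptual addition but not a different argument.
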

\begin{proof}
Since $\L_{ff}>0$ if $n_a\ge2$, it is obvious that the linear time-invariant system~\eqref{eq_networkLocalizeMatrix_achor} is exponentially stable.
The the final estimate (i.e., the equilibrium) that satisfies $\dot{\hat{p}}_f=0$ is given by $\hat{p}_f(\infty)=-\L_{ff}^{-1}\L_{fa}p_a$.
Since $p_f=-\L_{ff}^{-1}\L_{fa}p_a$ according to Lemma~\ref{lemma_NetEst_anchorCondition}, we know $\hat{p}_f(\infty)=p_f$.
\end{proof}

\section{Simulation Examples}\label{section_simulation}

\subsection{Examples for Bearing-Based Formation Control}

We now show two examples in Figs.~\ref{fig_sim_3DCube_leaderless} and \ref{fig_sim_3DCube_leaderFollower} to verify the leaderless and leader-follower controllers~\eqref{eq_formationControlLaw_leaderless} and \eqref{eq_formationControlLaw_leaderFollower}, respectively.
The target formation for the two examples is a three-dimensional cube with 8 agents and 13 edges.
The target formation is infinitesimally bearing rigid because $\rank(R_B)=20=3n-4$ according to Theorem~\ref{theorem_conditionInfiParaRigid}.
The initial formations are randomly generated.
There are no leaders in the example in Fig.~\ref{fig_sim_3DCube_leaderless}, while there are two fixed leaders in Fig.~\ref{fig_sim_3DCube_leaderFollower}.
As can be seen, the target formation can be achieved in both of the two examples.
But the centroid and scale of the final formations are different.
In the leader-follower case, the centroid and the scale of the final formation are determined by the two fixed leaders.

\begin{figure}
  \centering
  \subfloat[Initial formation]{\includegraphics[width=0.4\linewidth]{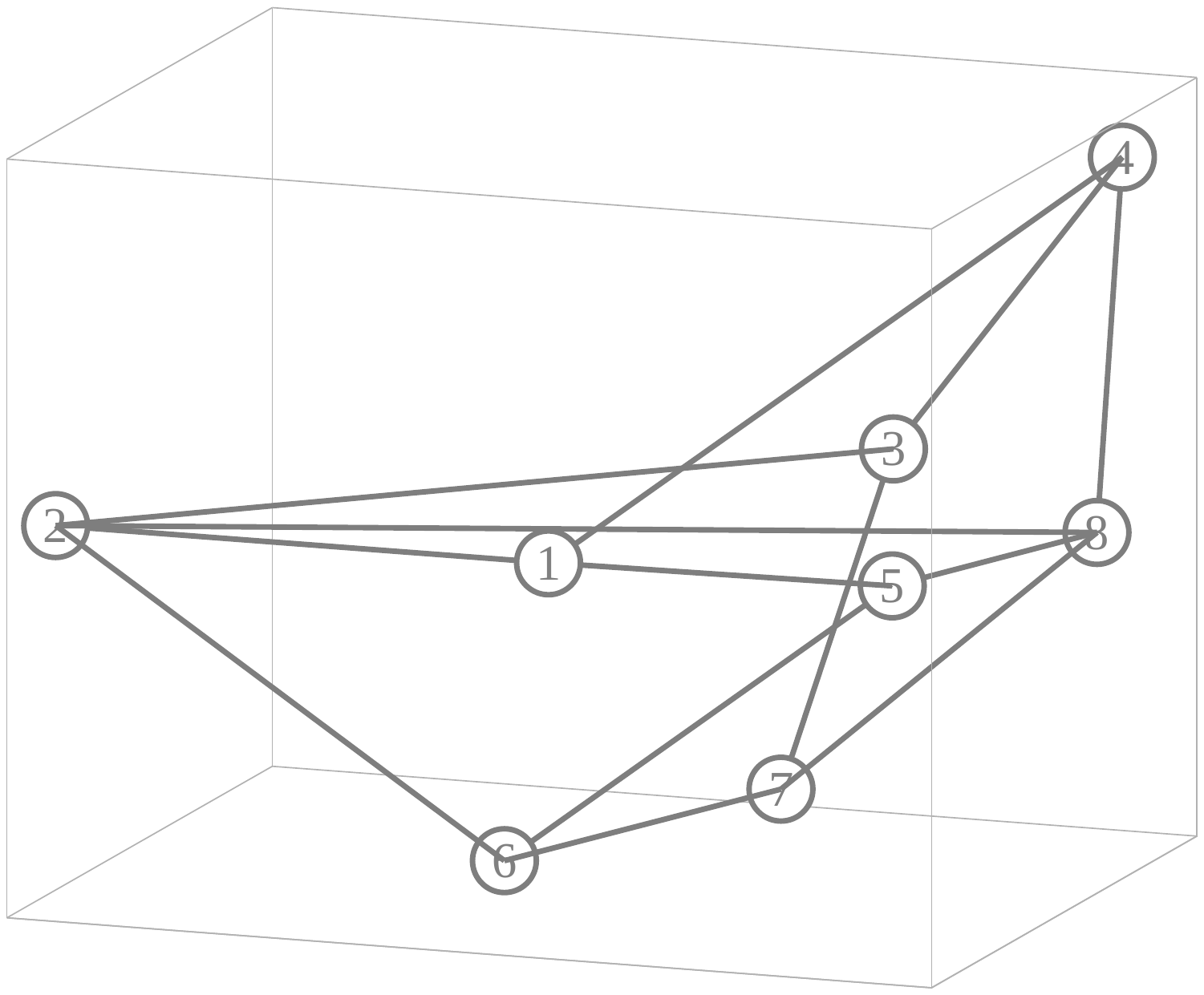}}\,
  \subfloat[Final formation]{\includegraphics[width=0.4\linewidth]{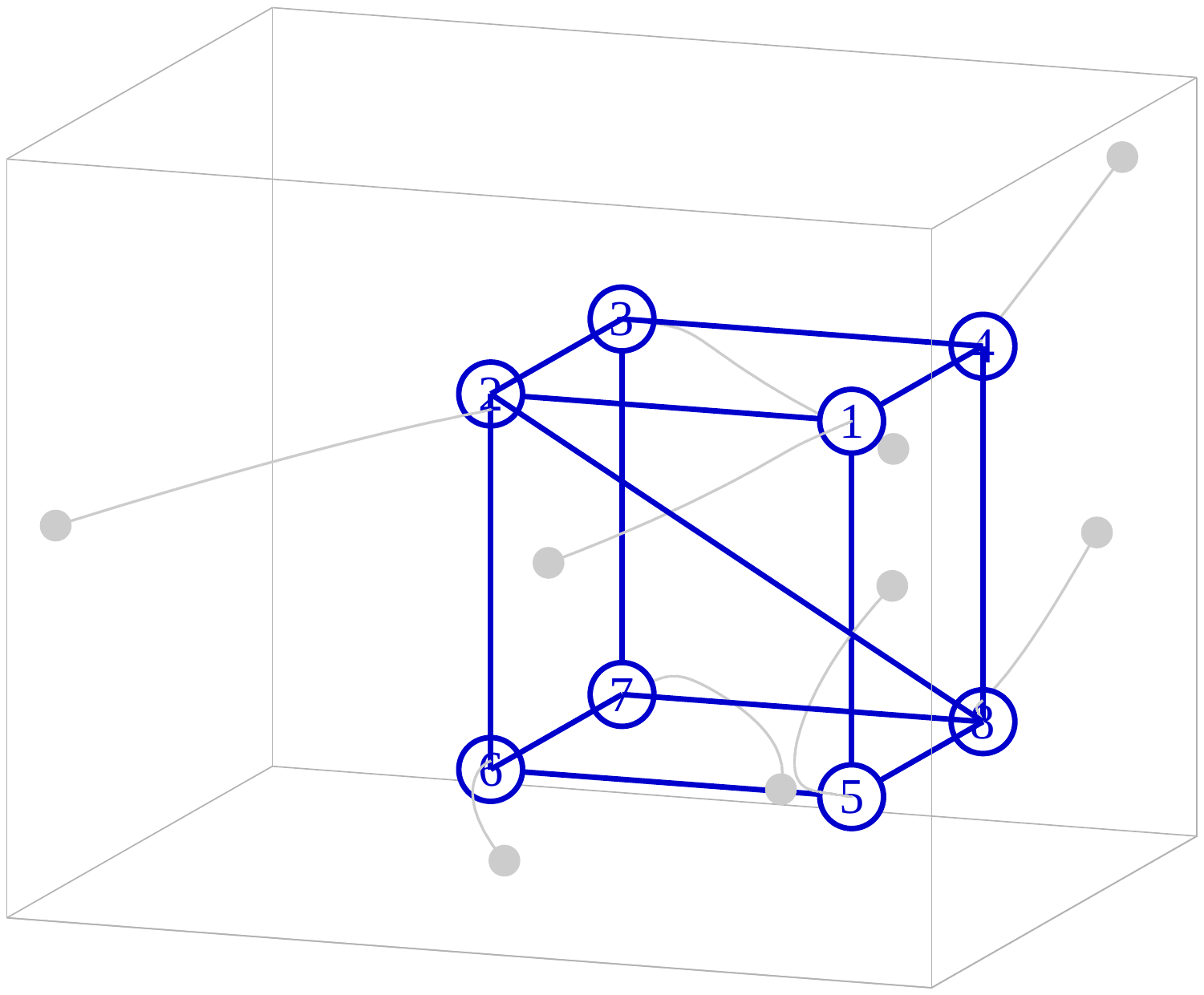}}
  \caption{A 3D example for bearing-based formation control: leaderless case.}
  \label{fig_sim_3DCube_leaderless}
\end{figure}
\begin{figure}
  \centering
  \subfloat[Initial formation]{\includegraphics[width=0.4\linewidth]{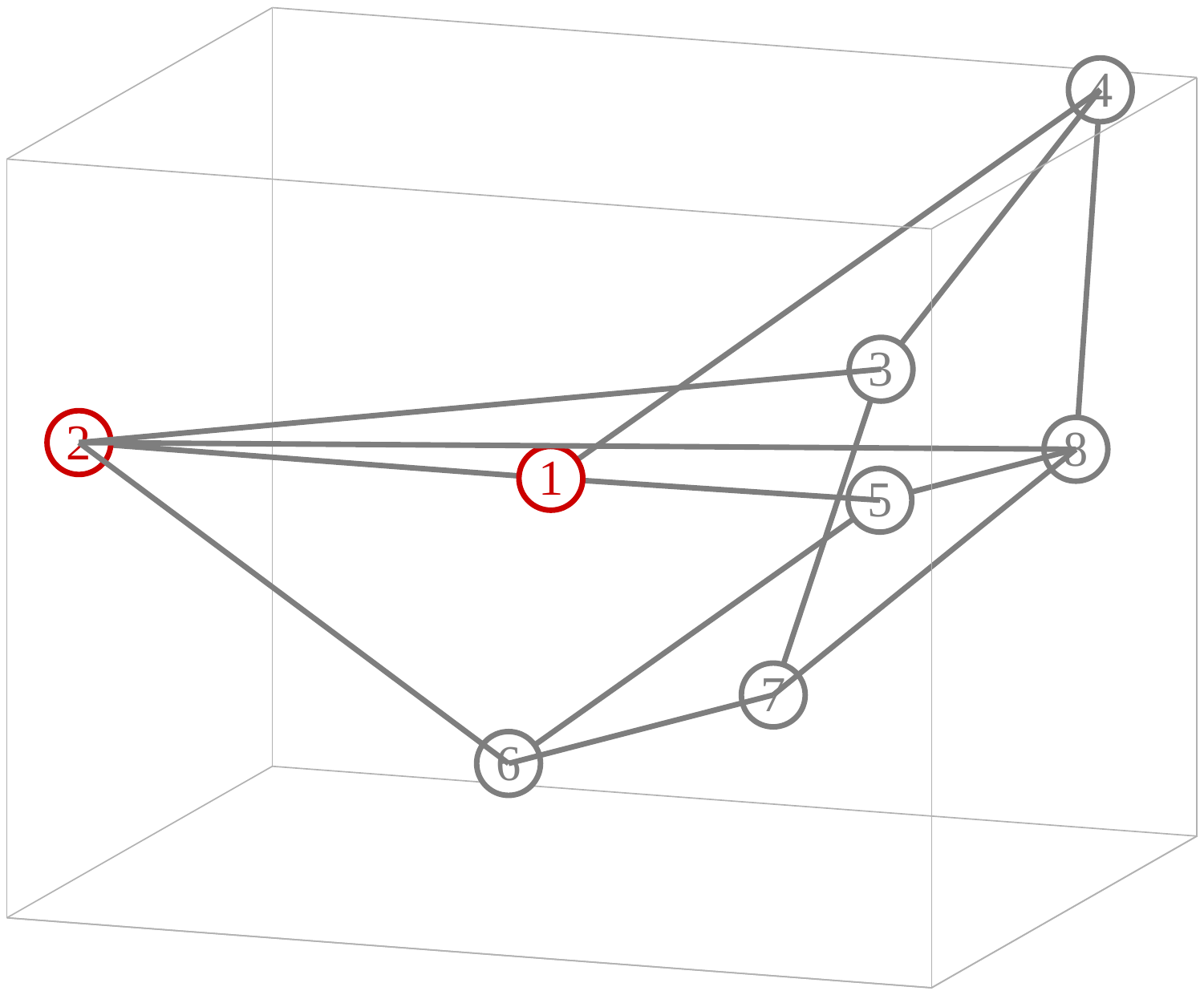}}\,
  \subfloat[Final formation]{\includegraphics[width=0.4\linewidth]{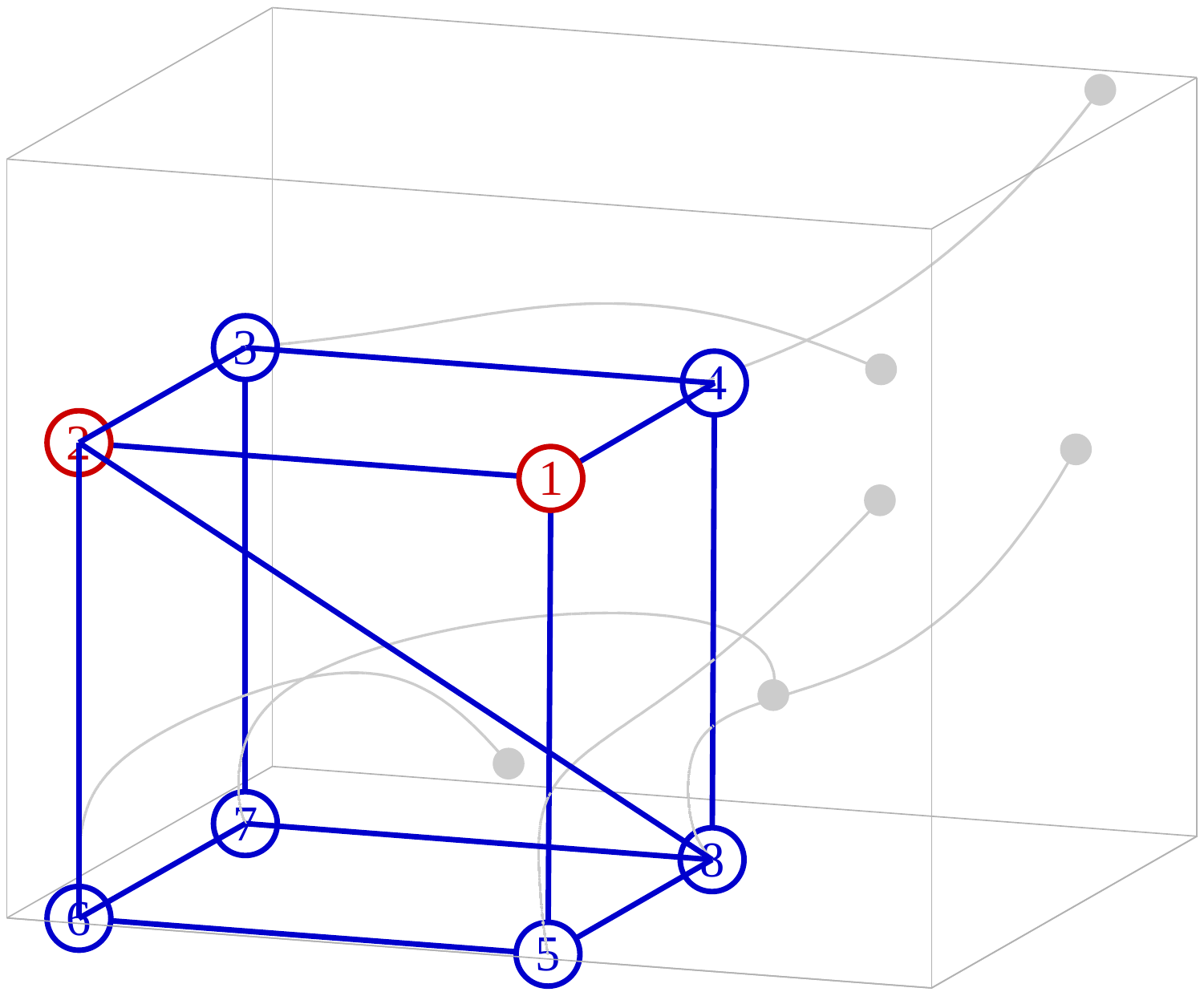}}
  \caption{A 3D example for bearing-based formation control: leader-follower case. Agents in red are fixed leaders.}
  \label{fig_sim_3DCube_leaderFollower}
\end{figure}

\subsection{Examples for Bearing-Based Network Localization}

We next present examples to verify the anchor-based network localization estimator \eqref{eq_networkLocalize_achor}.
Fig.~\ref{fig_sim_NetworkEst}(a) shows a three-dimensional network with 50 agents, 269 edges, and 4 fixed anchors.
The network is infinitesimally bearing rigid because $\rank(R_B)=146=3n-4$.
The initial estimate shown in in Fig.~\ref{fig_sim_NetworkEst}(b) is randomly chosen.
As can be seen in Fig.~\ref{fig_sim_NetworkEst}(c)-(d), the estimate errors $\|\hat{p}_i(t)-p_i\|$ finally converge to zero.

\section{Conclusions}\label{section_conclusion}

In this paper, we applied bearing rigidity theory to solve two bearing-based control and estimation problems in arbitrary dimensional spaces.
The proposed linear controllers and estimators can globally solve the two problems without ambiguity, respectively.
This paper only considered the cases of undirected and fixed underlying graphs.
One may study the cases of directed and switching graphs in the future.

\begin{figure}
  \centering
  \subfloat[Real network]{\includegraphics[width=0.25\linewidth]{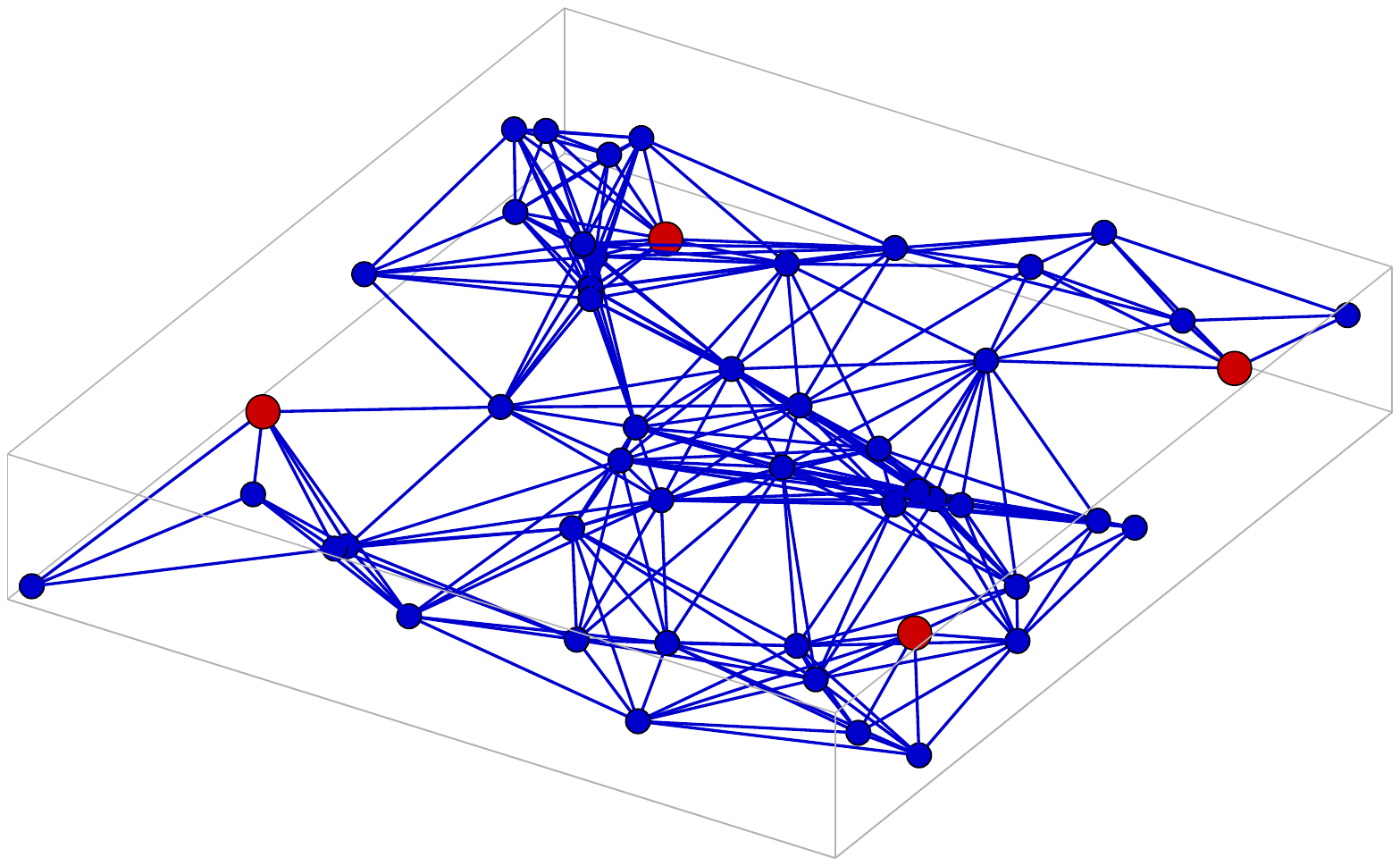}}
  \subfloat[Initial localization]{\includegraphics[width=0.25\linewidth]{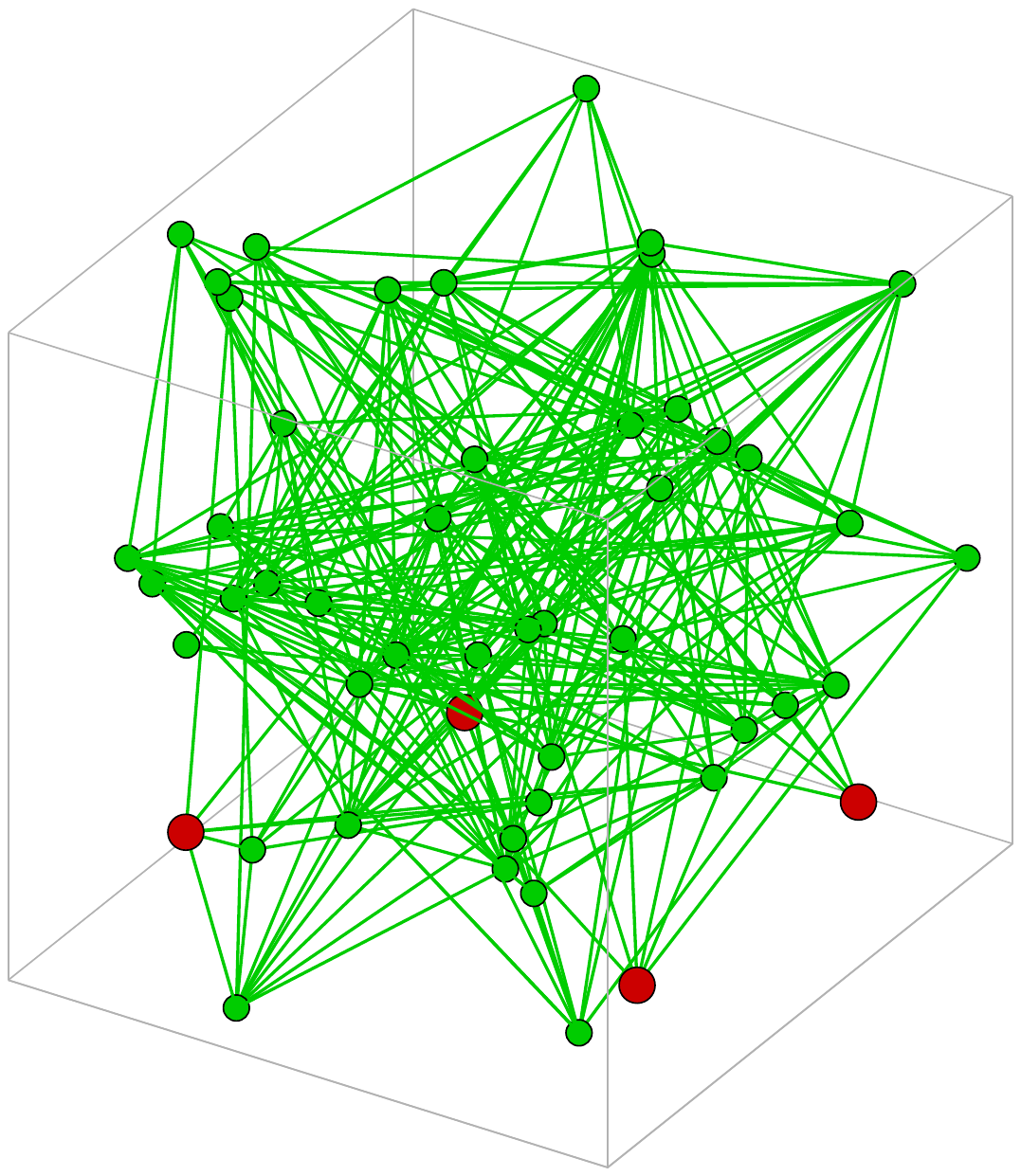}}
  \subfloat[Final localization]{\includegraphics[width=0.25\linewidth]{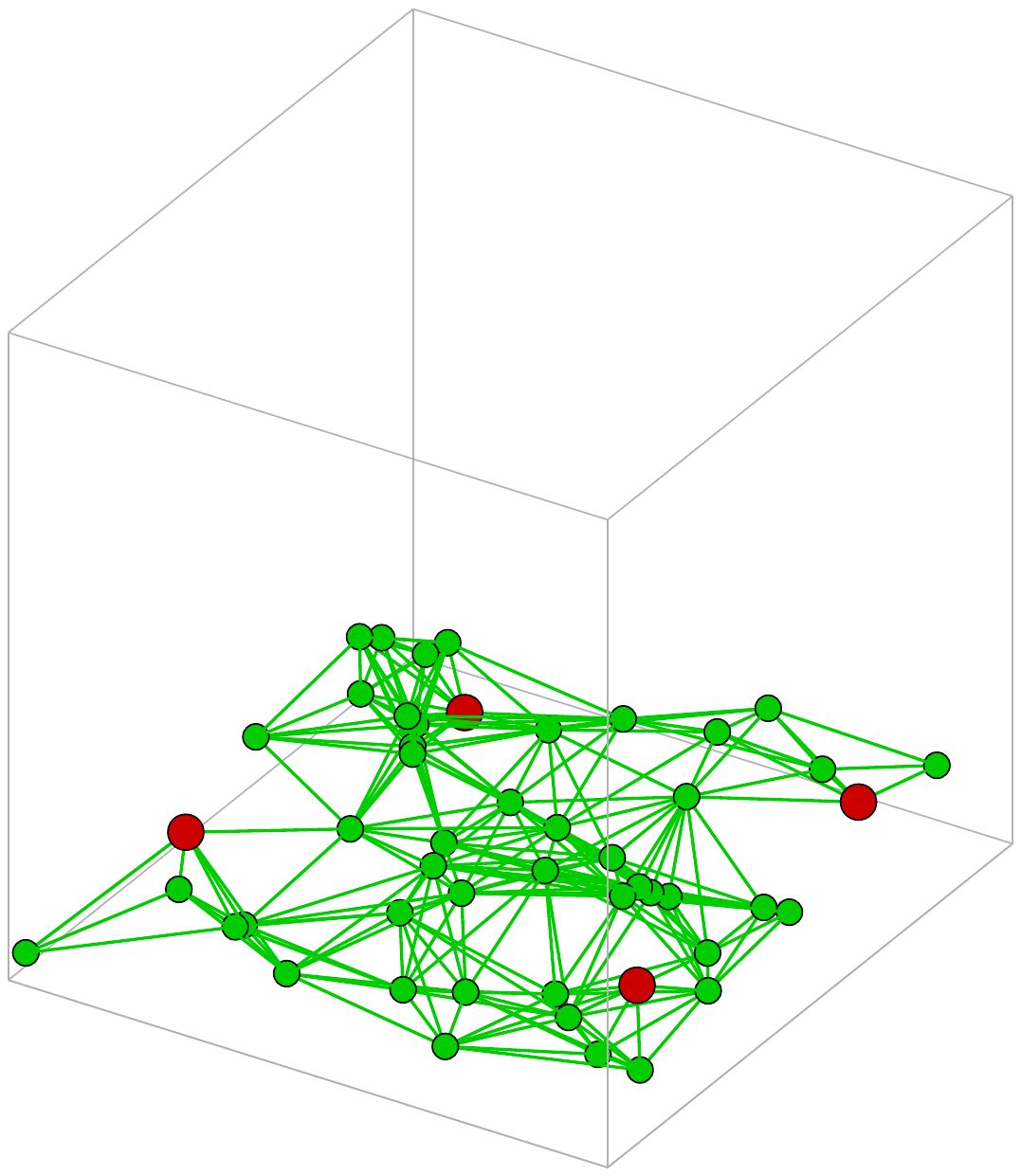}}
  \subfloat[Localization error]{\includegraphics[width=0.25\linewidth]{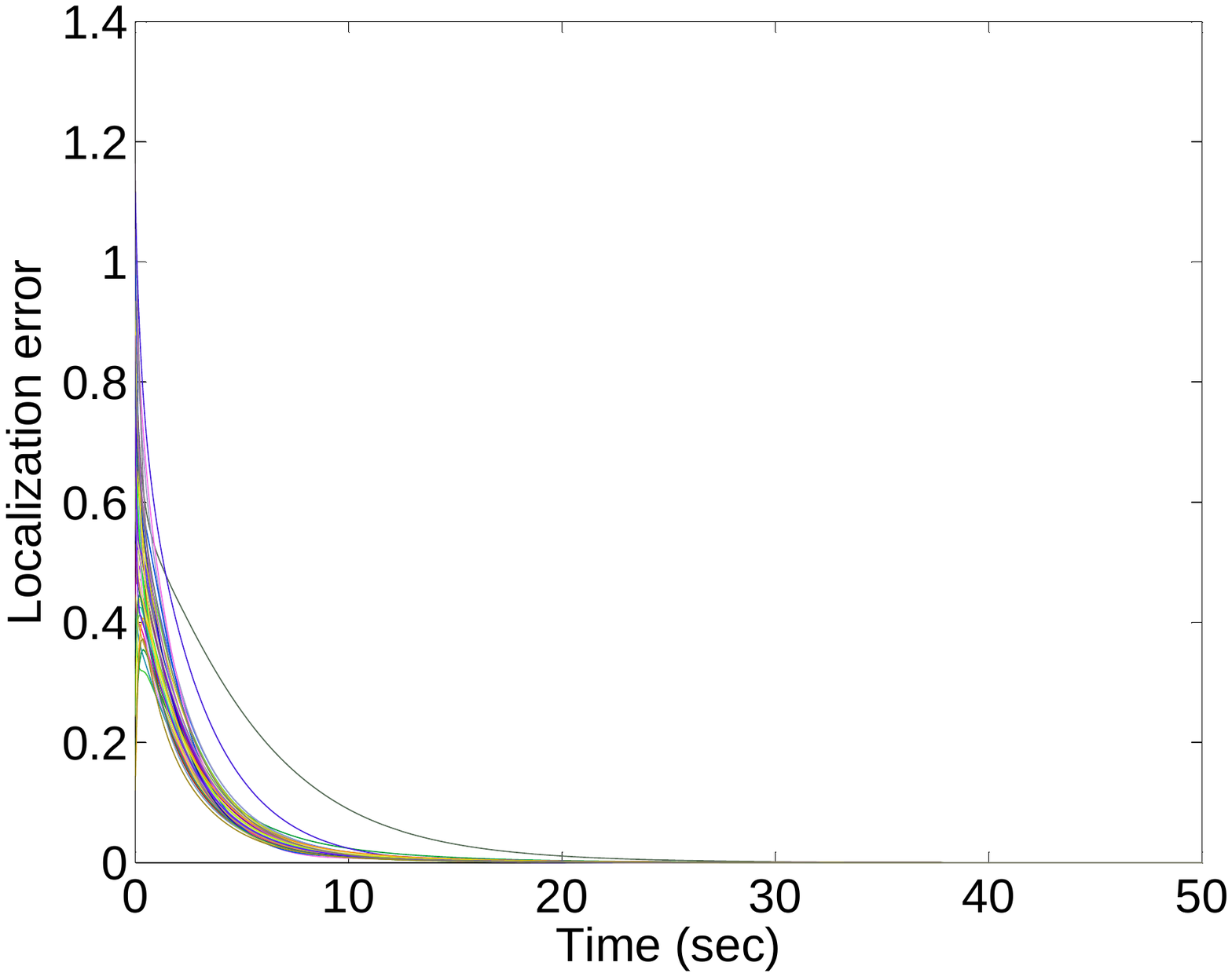}}
  \caption{An example for bearing-based network localization. Green dots: estimates; red dots: anchors.}
  \label{fig_sim_NetworkEst}
\end{figure}

{\small
\section*{Acknowledgements}
The work presented here has been supported by the Israel Science Foundation.

\bibliography{myOwnPub,zsyReferenceAll} 
\bibliographystyle{ieeetr}
}
\end{document}